\theoremstyle{definition}
\newtheorem{thm}{Theorem}
\newtheorem{lem}{Lemma}
\newtheorem{prop}{Proposition}
\newtheorem{cor}{Corollary}
\newtheorem{conj}{Conjecture}
\newtheorem{dfn}{Definition}
\newtheorem{rem}{Remark}
\newtheorem{ex}{Example}
\newcommand{\B}{\mathbb{B}}
\newcommand{\Sl}{\mathbb{S}}
\newcommand{\Net}{\mathrm{Evo}}
\newcommand{\csentence}{\textbf}
\newcommand{\doiurl}[1]{\href{http://dx.doi.org/#1}{\nolinkurl{#1}}}
\begin{document}
\title{A circuit-preserving mapping from multilevel to Boolean dynamics}

\author{Adrien Faur\'e}
\address[A. Faur\'e]{Department of Physics and Information Science \endgraf
Yamaguchi University  \endgraf 
1677-1, Yoshida, Yamaguchi 753-8512, Japan}
\email{afaure@yamaguchi-u.ac.jp}
\author{Shizuo Kaji}
\address[S. Kaji]{Department of Mathematics \endgraf
Yamaguchi University  \endgraf 
1677-1, Yoshida, Yamaguchi 753-8512, Japan \endgraf
/ JST PRESTO}
\thanks{The authors contributed equally to this work.}
\email{skaji@yamaguchi-u.ac.jp}

\date{}
\keywords{Discrete dynamical system, Automata network, Boolean network, Regulatory network, Genetic regulation, Thomas' conjecture}
\subjclass[2010]{
Primary 68R05; Secondary 92D99}

\maketitle

\begin{abstract}
Many discrete models of biological networks rely exclusively on Boolean variables
and many tools and theorems are available for analysis of {strictly} Boolean models.
However, multilevel variables are often required to account for threshold effects,
in which knowledge of the Boolean case does not generalise straightforwardly.
This motivated the development of conversion methods for multilevel to Boolean models.
In particular, Van Ham's method has been shown to yield a one-to-one, neighbour and regulation preserving dynamics, making it the de facto standard approach to the problem.
However, Van Ham's method has several drawbacks: most notably, it introduces vast regions of ``non-admissible'' states
 that have no counterpart in the multilevel, original model. 
 This raises special difficulties for the analysis of interaction between variables and circuit functionality, which is believed to be central to the understanding of dynamic properties of logical models.
Here, we propose a new multilevel to Boolean conversion method, with software implementation.
Contrary to Van Ham's, our method doesn't yield a one-to-one transposition of multilevel trajectories;
however, it maps each and every Boolean state to a specific multilevel state, thus getting rid of the non-admissible regions and, at the expense of (apparently) more complicated, ``parallel'' trajectories.
One of the prominent features of our method is that it preserves dynamics and interaction of variables in a certain manner.
As a demonstration of the usability of our method,
we apply it to construct a new Boolean counter-example to the well-known conjecture that a local negative circuit is necessary to generate sustained oscillations.
This result illustrates the general relevance of our method for the study of multilevel logical models.

\end{abstract}

\section{Background}
{Boolean} models 
{have proved} very useful in the analysis of various networks in biology.
{However, it is often convenient to introduce multilevel variables to account for multiple threshold effects.}
We are often faced with choices between using Boolean variables or multilevel variables.
This can be crucial since theoretical results are sometimes proved only for Boolean or multilevel networks.
A particular example of this situation is in {Ren\'e Thomas' conjecture} 
 that a local negative circuit is necessary to produce sustained (asynchronous) oscillations.
This paper stems from the simple idea that 
a Boolean counter-example to {that} conjecture
could be found by transposing a multilevel counter-example found earlier by Richard and Comet.
However, we believe the method developed in this paper, together with a handy script which implements it,
 is widely applicable to other theoretical studies which involves 
discrete networks.
We also find the notion of asymptotic evolution function defined in this paper sheds light on the understanding of
relation between the state transition graph and the interaction graph.

\subsection{Introduction}

Introduced in the 1960s-70s to model {\em biological regulatory networks}, the logical (discrete) formalism has gained increasing popularity, with recent applications as diverse as drosophila development, cell cycle control, or immunology (see \cite{Abou-Jaoude2016} for a survey).
While many of these models rely exclusively on Boolean variables, 
{it is often} useful to introduce multilevel variables 
to account for more refined behaviour.
However,
many tools and theoretical results are restricted to the Boolean case
(see e.g. \cite{Stoll2012, MacNamara2012, Helikar2012})
This situation motivated the development of methods to convert multilevel models to Boolean ones \cite{Remy2006, VanHam}.
A simple idea for such a conversion was introduced by Van Ham \cite{VanHam}, and this method has been shown to be essentially 
the only one that could provide a ``one-to-one, neighbour and regulation preserving map'' \cite{Didier2011}.
One problem with the conversion is that the resulting Boolean model is 
defined only on a 
sub-region of the whole Boolean state space, 
called the \emph{admissible region},
and how to extend the model outside {that} 
region is not trivial.
This leads to potential problems with analytical tools designed to deal with the whole state space, as a property that is true in the restricted domain may be false on the whole state space, and vice versa.
The primary goal of the present paper is to address this issue by introducing an extension of Van~Ham's method.
More precisely, we introduce a new method for multilevel to Boolean model conversion
which extends the domain of Van Ham's model to the whole state space
while preserving edge functionality and, therefore, local circuits.
Our mapping yields a state transition graph with ``parallel'' trajectories
that contains the one obtained by Van Ham's mapping as a sub-graph in such a way that attractors of the dynamics are preserved. 

We apply our method to investigate a particular class of theoretical results that connect the asynchronous behaviour of a model to the presence of
 \emph{regulatory circuits} in the interaction graph. 
 In the early 1980s, R. Thomas conjectured that the presence of a positive circuit (\emph{i.e.} a circuit where each component directly or indirectly has a positive effect upon itself) in the interaction graph is a necessary condition for multi-stability, and a negative circuit (where each component has a negative effect upon itself) is necessary for sustained oscillations \cite{Thomas1981}.
 One particular formulation of the conjecture focuses on local or ``type-1'' circuits \cite{Comet2013}, i.e. circuits whose arcs are all functional in the same single point of the system's state transition graph -- as opposed to global circuits whose arcs may be functional anywhere. While the conjecture holds for positive circuits both at the global and local levels, and for multilevel as well as Boolean models \cite{Remy2008, Richard2007}, in the negative case the conjecture could only be proved true at the global level \cite{Remy2008}. At the local level, a counter-example has first been published for multilevel models \cite{Richard2010}, while the Boolean case remained open \cite{Comet2013} until a Boolean counter-example was eventually discovered \cite{Ruet2017}, showing that contrary to expectations, a local negative circuit was not necessary to generate sustained oscillations. Interestingly, the approaches taken by P. Ruet and A. Richard are rather different, and their counter-examples have little in common.
Applying our method to the Richard-Comet multilevel counter-example,
we obtain a new Boolean counter-example to the conjecture that a local negative circuit is necessary to produce sustained oscillations.
 
\subsection{Definitions}
\subsubsection{Evolution function and State transition graph}
We work within the generalised logical framework introduced by Ren\'e Thomas and collaborators~\cite{Thomas1990}; see Abou-Jaoud\'e \emph{et al.}~\cite{Abou-Jaoude2016} for a recent review.
Here, we introduce the notation we use throughout this paper. 
Fix positive integers $n$ and $m_i \ (1\le i\le n)$.
Consider a system consisting of mutually interacting $n$ genes, indexed by the set $I=\{1,2,\ldots,n\}$.
Each gene $a_i$ takes expression levels in the integer interval $\{0,1,\ldots,m_i\}$.
The state of the system evolves depending on the current state.
This leads to a discrete dynamical system represented by a {\em evolution function over $M$}
\[
f=(f_1,f_2,\ldots,f_{n}): M \to M,
\]
 where $M=\left\{ (x_1,\ldots,x_n)\mid x_i\in \{0,1,\ldots,m_i\} \right\}$.
As a special case when $m_i=1$ for all $i\in I$, we denote $M=\B^n$ with $\B=\{0,1\}$ and call the system {\em Boolean}.
A basic question asks what we can tell about the asymptotic global behaviour of the dynamics,
which is encoded in the \emph{state transition graph},
 from local data of $f$, which are encoded in the \emph{partial derivatives} of $f$ or the \emph{interaction graph}.


The evolution of the whole system can be formally modelled by a certain kind of directed graph on $M$.
We equip $M$ with the usual metric
$d(x,x')=\sum_{i=1}^{n} |x_i-x'_i|$ for $x,x'\in M$.
Denote by $e_1=(1,0,0,\ldots), e_2=(0,1,0,0,\ldots),$ etc. the coordinate vectors of $M$. 
A {\em grid graph $\Gamma$ over $M$} is a graph with the vertex set $M$ satisfying that
\begin{itemize}
\item each directed edge connects a pair of vertices of distance one
\item at each vertex $x$ there are no two parallel outward edges; that is, 
$x-e_j \leftarrow x \to x+e_j$
 is not allowed.
\end{itemize}
The state of the whole system is represented by the levels of genes, 
{and} corresponds to a vertex in $\Gamma$.
At each time step, the state evolves to one of its neighbouring vertices connected by an arrow in the following way.
To an evolution function over $M$, we associate a grid graph $\Gamma(f)$ over $M$
called the {\em (asynchronous) state transition graph}
with the edge set 
\begin{equation}\label{eq:stg}
\left\{(x_1,x_2,\ldots,x_j,\ldots,x_{n}) \to (x_1,x_2,\ldots,x_j+\delta,\ldots,x_{n}),
\delta=\begin{cases} -1 & (f_j(x)<x_j) \\ +1 & (f_j(x)>x_j) \end{cases} \right\}.
\end{equation}
Note that here we follow the standard convention that transition of states is \emph{unitary} (see \cite[\S4]{Richard2010})
so that the existence of an edge $x\to x'$ implies $d(x,x')=1$;
that is, at each step the level of a single gene changes at most by one.

Asymptotic behaviour of the evolution of a system can be captured in a graph theoretical entity of the state transition graph.
An {\em attractor} is a terminal strongly connected sub-graph of $\Gamma$;
that is, any two elements of it are connected by a path and there is no edge from its elements to one in the complement.
An attractor consisting of a single vertex is called a {\em stable state}, 
otherwise it is called a {\em cyclic attractor}.
Intuitively, attractors are domains in $\Gamma$ in which the system eventually resides;
there is no way to escape once the system arrives in it, but each state in the domain can be visited after arbitrarily many steps.

\subsubsection{Interaction graph and circuit functionality}
A common practice in analysing interactions among genes in a network
is to encode it in the form of a labelled directed graph called the interaction graph,
where interaction is measured by the partial derivatives of the evolution function $f=(f_1,f_2,\ldots,f_n): M\to M$.

The {\em forward partial derivative} of $f_i$ along the $j$-th coordinate at $x=(x_1,\ldots,x_n)$ with $x_j<m_j$ is defined by
\[
\partial^+_j f_{i}(x)=f_i(x_1,\ldots,x_j+1,\ldots,x_{n})-f_i(x_1,\ldots,x_j,\ldots,x_{n})
=f_i(x+e_j)-f_i(x).
\]
The {\em backward partial derivative} along the $j$-th coordinate at $x$ with $x_j>0$ is defined similarly by
\[
\partial^-_j f_{i}(x)=f_j(x_1,\ldots,x_j,\ldots,x_{n})-f_j(x_1,\ldots,x_j-1,\ldots,x_{n})
=f_i(x)-f_i(x-e_j).
\]
Partial derivatives $\partial^+_j f_{i}(x)$ and $\partial^-_j f_{i}(x)$
are non-trivial when the $i$-th gene's target value changes along the change of the $j$-th gene.
They encode the dependence between genes locally at the state $x\in M$.

\begin{rem}
For a Boolean network, only one of the forward or the backward partial derivative exists at each $x$, so we just put them together to define the ordinary partial derivative denoted by $\partial_j$. 
On the other hand, in multilevel case, we have both the forward and the backward partial derivatives at some $x$.
It is important to consider both of them (c.f. \cite[Definition 8]{Richard2010}).
\end{rem}

\begin{dfn}
The (local) {\em interaction graph} $Gf(x)$ of $f$ at $x$ is a graph over
the vertex set $I$ such that there exists an edge from $j$ to $i$ 
\begin{itemize}
\item with label ``$+$'' if $\partial^+_j f_{i}(x)>0$ or $\partial^-_j f_{i}(x)>0$
\item with label ``$-$`` if $\partial^+_j f_{i}(x)<0$ or $\partial^-_j f_{i}(x)<0$.
\end{itemize}
Note that we can have both positive and negative edges from $j$ to $i$ at the same time.
We define the {\em global interaction graph} $Gf(M)$ as the union of edges of $Gf(x)$ for all $x\in M$.
\end{dfn}

\begin{dfn}[{\cite[Definition 1, Proposition 1]{Comet2013}}]{\ }\\
\begin{itemize}
\item
A cycle $C$ in $Gf(M)$ is called a positive (resp. negative) circuit 
if it contains an even (resp. odd) number of negative edges.
\item
A circuit $C$ is said to be {\em type-1 functional} if $C\subset Gf(x)$ for some $x$.
\end{itemize}
\end{dfn}

As in the continuous case, a function $f$ 
is recovered up to constant by its partial derivatives:
For two evolution functions $f,g: M\to M$ satisfying
$\partial^+_j f_i=\partial^+_j g_i$ for all $i,j\in I$
(or $\partial^-_j f_i=\partial^-_j g_i$ for all $i,j\in I$), 
the difference $f_i(x)-g_i(x)$ is constant for any $i\in I$.
In particular, two distinct Boolean evolution functions have the same partial derivatives if and only if
they are constant and do not coincide at any point.
This means the partial derivatives have almost all the information of the network 
However, the next example shows that 
the partial derivatives 
are not enough to determine the asymptotic behaviour of the dynamics.
\begin{ex}\label{ex:J}
Consider the the Boolean evolution functions defined by
\[
f(x_1,x_2) = \begin{cases} (0,0) & ((x_1,x_2)=(1,0))\\ (1,0) & (\text{otherwise}) \end{cases} \quad
g(x_1,x_2) =  \begin{cases} (0,1) & ((x_1,x_2)=(1,0))\\ (1,1) & (\text{otherwise}). \end{cases}
\]
Since they differ by a constant, their partial derivatives agree.
There exists the unique cyclic attractor $(0,0) \leftrightarrow (1,0)$ in $\Gamma(f)$,
whereas there exists the unique stable state $(1,1)$ in $\Gamma(g)$.
\end{ex}

\section{Methods}
\subsection{Asymptotic evolution function}
The correspondence between evolution functions and state transition graphs is not bijective.
In fact, as discussed by Streck \emph{et al.}~\cite{Streck2015}, for a given (multilevel) grid graph $\Gamma$, there are multiple evolution functions which have $\Gamma$ as 
their state transition graphs.
To have a bijective correspondence between the two representations of the system,
we restrict ourselves to a certain class of evolution functions.
There are two major conventions:
\begin{enumerate}
\item We say $f$ is {\em stepwise} or {\em unitary} if
$|f_i(x_1,\ldots,x_{n})-x_i| \le 1$
for all $i\in I$ and $x\in M$.
\item We say $f$ is {\em asymptotic} if
$f_i(x_1,\ldots,x_i,\ldots,x_{n})\in \{0,x_i,m_i\}$ for all $i\in I$ and $x\in M$.
\end{enumerate}
In both cases, $f_i$ encodes only the sign of $f_i(x)-x_i$.

For any evolution function, there exists a unique asymptotic and a unique stepwise evolution functions
having the same state transition graph.
\begin{prop}\label{graph-function-correspondence}
For any evolution function $f$, define
\[
\bar{f}_i(x)=\begin{cases} m_i & (f_i(x)>x_i) \\ x_i & (f_i(x)=x_i) \\ 0 & (f_i(x)<x_i) \end{cases} \quad
\hat{f}_i(x)=\begin{cases} x_i+1 & (f_i(x)>x_i) \\ x_i & (f_i(x)=x_i) \\ x_i-1 & (f_i(x)<x_i). \end{cases}
\]
Then, $\bar{f}$ is asymptotic and $\hat{f}$ is stepwise with $\Gamma(f)=\Gamma(\bar{f})=\Gamma(\hat{f})$.

Similarly, for any grid graph $\Gamma$, there exists a unique asymptotic  
function $\bar{f}^\Gamma$ and a unique stepwise function $\hat{f}^\Gamma$ such that 
$\Gamma=\Gamma(\bar{f}^\Gamma)=\Gamma(\hat{f}^\Gamma)$.
\end{prop}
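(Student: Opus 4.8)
The plan is to use the fact, visible from the definition \eqref{eq:stg}, that the state transition graph $\Gamma(f)$ depends on $f$ only through the signs $\sign(f_i(x)-x_i)$ for $i\in I$ and $x\in M$. Writing $\sigma_i(x):=\sign(f_i(x)-x_i)\in\{-,0,+\}$, the edge of $\Gamma(f)$ in direction $i$ at $x$ is $x\to x+e_i$, $x\to x-e_i$, or absent according to whether $\sigma_i(x)$ equals $+$, $-$, or $0$; moreover $\sigma_i(x)=+$ forces $x_i<m_i$ and $\sigma_i(x)=-$ forces $x_i>0$, because $f_i(x)\in\{0,1,\ldots,m_i\}$.

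For the first assertion I would first check that $\bar f$ and $\hat f$ take values in $M$: in the branch $f_i(x)>x_i$ one has $x_i<m_i$, so both $m_i$ and $x_i+1$ are legal levels, and symmetrically the branch $f_i(x)<x_i$ forces $x_i>0$. Then $\bar f$ is asymptotic and $\hat f$ is stepwise immediately from the defining formulas, and $\bar f_i(x)-x_i$ and $\hat f_i(x)-x_i$ have the same sign as $f_i(x)-x_i$ for all $i,x$; hence $\sigma_i$ is unchanged and $\Gamma(f)=\Gamma(\bar f)=\Gamma(\hat f)$.

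For the second assertion, let $\Gamma$ be a grid graph. The two grid-graph axioms ensure that at each $x$ and each direction $i$ at most one of $x\to x+e_i$ and $x\to x-e_i$ is an edge, so a well-defined sign $\sigma_i^\Gamma(x)\in\{-,0,+\}$ can be read off. Setting $\bar f^\Gamma_i(x)$ to $m_i$, $0$, or $x_i$ (resp. $\hat f^\Gamma_i(x)$ to $x_i+1$, $x_i-1$, or $x_i$) according to whether $\sigma_i^\Gamma(x)$ is $+$, $-$, or $0$ yields, by the remark in the first paragraph, evolution functions with $\Gamma(\bar f^\Gamma)=\Gamma(\hat f^\Gamma)=\Gamma$; this gives existence. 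For uniqueness, if $g$ is asymptotic with $\Gamma(g)=\Gamma$ then $\sign(g_i(x)-x_i)=\sigma_i^\Gamma(x)$ for all $i,x$, and since $g_i(x)\in\{0,x_i,m_i\}$ and the only member of this set exceeding $x_i$ (resp. smaller than $x_i$) is $m_i$ (resp. $0$), the value $g_i(x)$ is forced and $g=\bar f^\Gamma$. The stepwise case is identical, with $\{x_i-1,x_i,x_i+1\}$ replacing $\{0,x_i,m_i\}$.

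I do not expect a genuine obstacle here: the argument is essentially the observation that $\bar f,\hat f$ (resp. $\bar f^\Gamma,\hat f^\Gamma$) are the unique asymptotic and stepwise representatives of the sign pattern $\sigma$. The only point requiring a little care is the boundary bookkeeping, i.e.\ verifying that whenever a branch assigns $m_i$ or $x_i+1$ (resp. $0$ or $x_i-1$) one indeed has $x_i<m_i$ (resp. $x_i>0$), which follows from $f_i(x)\in\{0,1,\ldots,m_i\}$ together with the no-parallel-outward-edges condition.
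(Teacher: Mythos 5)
Your proof is correct and follows essentially the same route as the paper: both reduce everything to the sign pattern $\sign(f_i(x)-x_i)$ (equivalently, the edge trichotomy at each vertex in each direction, which the no-parallel-outward-edges condition makes well defined) and observe that the asymptotic and stepwise functions are the unique representatives of that pattern. You simply spell out the boundary bookkeeping and the uniqueness step, which the paper's short proof leaves implicit.
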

\begin{proof}
We see how to define $\bar{f}^\Gamma$.
For a vertex $x\in \Gamma$ and $i\in I$, 
we have only one of the three possibilities:
\begin{itemize}
\item $x\to x-e_i$
\item $x\to x+e_i$
\item there is no edge from $x$ in the direction of $e_i$.
\end{itemize}
We define an asymptotic evolution function $\bar{f}^\Gamma$ by setting $f^\Gamma_i(x)=0, m, x_i$ accordingly.
\end{proof}

If we are interested in the evolution of a system, which is encoded in the state transition graph, we can
restrict ourselves to either the class of stepwise evolution functions or the class of asymptotic evolution functions.
Our choice in this paper is to restrict ourselves to the latter, and 
we identify an asymptotic evolution function with its state transition graph and vice versa.
In the rest of the paper, we assume functions are asymptotic unless otherwise stated and denoted just by $f$
without a bar over it.
 
There is a little difference in the interaction graph when we consider the stepwise case instead of the asymptotic case.
When $i\neq j$, $\partial^+_i f_j(x)$ and $\partial^+_i \hat{f}_j(x)$ have the same sign,
and same is true for the backward partial derivatives.
However, when $i=j$, we have the following difference.
\begin{ex}
Consider the asymptotic evolution function $f_1(0)=2, f_1(1)=2, f_1(2)=2$ over $M=\{0,1,2\}$.
The corresponding stepwise evolution function is $\hat{f}_1(0)=1, \hat{f}_1(1)=2, \hat{f}_1(2)=2$.
At $x=1$, there is no edges in the interaction graph of $f$ while
there is a positive self-loop in the one of $\hat{f}$.
On the other hand, consider the asymptotic evolution function $f_1(0) = 2, f_1(1) = 1, f_1(2) = 0$.
At $x=1$, there is a negative self-loop in the interaction graph of $f$,
 whereas there is no arrow in the one of the corresponding stepwise evolution function
 $\hat{f}_1(0)= 1, \hat{f}_1(1) = 1, \hat{f}_1(2) = 1$.
\end{ex}
In short, the interaction graphs of an asymptotic function and its corresponding stepwise function are the same only up to self-loops.

A function which is neither asymptotic nor stepwise has
in general more non-trivial partial derivatives than the asymptotic and the stepwise functions
sharing the same state transition graph given in Proposition \ref{graph-function-correspondence}.
(See \cite{Streck2015} for a detailed discussion. The stepwise function in our paper 
can be seen as a special case of the canonical function defined there.)
\begin{ex}
An asymptotic function $f: \{0,1,2\}^2 \to \{0,1,2\}^2$ defined by
$(f_1,f_2)(x_1,x_2)=(2,2)$ have the same state transition graph with
\[
(g_1,g_2)(x_1,x_2)=\begin{cases} (1,2) & (x_1,x_2)=(0,0) \\ (2,2) & (\text{otherwise}) \end{cases}.
\]
However, $\partial_2 f_1(0,0)=0$ and $\partial_2 g_1(0,0)=1$.
This means, there is a positive arrow $x_2\to x_1$ in $Gg(0,0)$ while there is no arrow in $Gf(0,0)$.
\end{ex}

\subsection{A mapping from multilevel to Boolean networks}\label{conversion}
Fix the set of states $M$ and a natural number $l$.
We consider mappings from the set $\Net(M)$ of asymptotic evolution functions on $M$
to the set $\Net(\B^l)$ of $l$-dimensional Boolean evolution functions.
Mappings between grid graphs are obtained from them by the correspondence given in Proposition \ref{graph-function-correspondence}.
Following \cite{Didier2011}, we introduce two preferable properties of such mappings.
\begin{dfn}\label{dfn:mapping-prop}
A mapping $\Psi: \Net(M) \to \Net(\B^l)$ is said to be
\begin{itemize}
\item \emph{neighbour-preserving} if there exists a map $b: M \to \B^l$ and $\psi: \B^l\to M$
such that $\psi\circ b$ is the identity on $M$ and 
$b$ and $\psi$ induce graph homomorphisms $\tilde{b}: \Gamma(f)\to \Gamma(\Psi(f))$ and $\tilde{\psi}: \Gamma(\Psi(f)) \to \Gamma(f)$
for any $f\in \Net(M)$.
\item \emph{globally regulation-preserving} if $G\Psi(f)(\B^l)\neq G\Psi(f')(\B^l)$ for any 
$f, f'\in \Net(M)$ with $Gf(M)\neq Gf'(M)$.
\item \emph{locally regulation-preserving} if there exists a map $b: M \to \B^l$
such that $G\Psi(f)(b(y))\neq G\Psi(f')(b(y))$ for any $y\in M$ and
any $f, f'\in \Net(M)$ with $Gf(y)\neq Gf'(y)$.
\end{itemize}
\end{dfn}
These properties are practically useful.
For a neighbour-preserving mapping, the two maps $b$ and $\psi$ give correspondence between the multilevel states and the Boolean states
in such a way that the state transition graph of any multilevel model is embedded in that of a Boolean model.
With a regulation-preserving mapping, one can recover the interaction graph of a multilevel network from the corresponding Boolean one.

A naive idea to convert an evolution function $f: M\to M$ to a Boolean one
 is to use an embedding (one-to-one map) $b: M \to \B^{l}$ of the set of multilevel states
to a higher dimensional set of Boolean states.
Then, the {\em conjugate} of $f$ with respect to $b$ is defined as
\[
f_{b}(x):=b\circ f\circ b^{-1}(x),
\]
which is defined only on $\mathrm{Im}(b)\subset \B^{l}$, the image of $b$.
The domain $\mathrm{Im}(b)$ is called the \emph{admissible region} for $f_b$.
{The state transition graph $\Gamma(f_b)$ in this case is defined to be the full sub-graph 
on $\mathrm{Im}(b)$ of the one defined by Eq. \eqref{eq:stg}.}
Van Ham \cite{VanHam} proposed one particular embedding $b_0: M\to \B^{m_1+m_2+\cdots+m_n}$
which is defined as the direct product of
\begin{equation}\label{VanHam}
 (b_0)_i: \{0,1,\ldots,m_i\}\to \B^{m_i}, \qquad (b_0)_i(k)=(\underbrace{1,1,\ldots,1}_k, 0,\ldots,0) \text{ for all } i\in I.
\end{equation}

Didier et al. \cite{Didier2011} showed that Van Ham's embedding
 is essentially the only one satisfying nice properties which they call
 {\em neighbour preservation} and {\em regulation preservation} (see Remark \ref{rem:didier} below).
On the other hand, an apparent inconvenience of this method
 is that the
resulting evolution function is defined only on the restricted domain $\mathrm{Im}(b)$, 
the set of admissible states \cite{Didier2011}.
In contrast, we will give a construction which produces a Boolean network defined on
the whole state space $\B^{m_1+\cdots+m_n}$.
The idea is to use a surjective map $\psi: \B^{m_1+\cdots+m_n}\to M$ rather than an embedding in the opposite direction.

\begin{rem}\label{rem:didier}
Properties similar to the first two in Definition \ref{dfn:mapping-prop} were introduced by Didier \emph{et al.}~\cite{Didier2011}
but only for embeddings $b: M\to \B^{l}$ (and mappings obtained by conjugation with embeddings Eq. \eqref{VanHam}).
Recall that an embedding $b: M\to \B^{l}$ is said to be neighbour-preserving if it satisfies
$d(b(y),b(y'))=1$ for any $y,y'\in M$ with $d(y,y')=1$.
Also an embedding is said to be regulation-preserving if
$Gf_b(\B^{l})\neq Gf'_b(\B^{l})$ when $Gf(M) \neq Gf'(M)$ for any $f,f': M\to M$;
that is, the global interaction graphs of the Boolean networks obtained by conjugation differ 
when so do those of the multilevel networks.
Our definitions are modified versions of theirs which apply to any mapping.
\end{rem}

Here, we define a mapping from $\Net(M)$ to $\Net(\B^{l})$ with $l=(m_1+\cdots+m_n)$,
and a mapping from grid graphs over $M$ to grid graphs over $\B^{l}$,
which possesses all three above properties.

Define a map $\psi: \B^{l} \to M$ by
\[
\psi(x_{1,1},x_{1,2},\ldots,x_{1,m_1},x_{2,1},\ldots, x_{n,m_n}) =(|y_1|,\ldots,|y_{n}|),
\]
where $y_i=(x_{i,1},\ldots,x_{i,m_i})\in \B^{m_i}$ and $|y_i|=\sum_{k=1}^{m_i} x_{i,k}$.
We denote the index set of $\B^{l}=\B^{m_1+\cdots+m_n}$ by $I_\B=\{(i,j_i)\mid 1\le i\le n, 1\le j_i \le m_i\}$.

\begin{dfn}\label{dfn:binarisation}
For an asymptotic multilevel evolution function $f\in \Net(M)$, its \emph{binarisation} $\B(f)\in \Net(\B^l)$ is defined by
\[
\B(f)_{i,j}(x):=\begin{cases} 0 & (f_i(\psi(x))<\psi(x)_i) \\
x_{i,j} & (f_i(\psi(x))=\psi(x)_i) \\
1 & (f_i(\psi(x))>\psi(x)_i). \end{cases}
\]
Conversely, we have
\begin{equation}\label{eq:bin2m}
f_i(\psi(x)) = \sum_{j=1}^{m_i} \B(f)_{i,j}(x).
\end{equation}

The binarisation of a grid graph $\Gamma$ on $M$, denoted by $\B(\Gamma)$,
is defined to be the grid graph on $\B^l$ such that
there exists a directed edge $x\to x'$ in $\B(\Gamma)$
if and only if $d(x,x')=1$ and there exists a directed edge $\psi(x)\to \psi(x')$ in $\Gamma$.
\end{dfn}
It is trivial to see $\B(f^\Gamma)=f^{\B(\Gamma)}$ and $\B(\Gamma(f))=\Gamma(\B(f))$.
We now identify the image of binarisation $\Net(M)\to \Net(\B^l)$.
The symmetric group $S_{m_i}$ acts on $\B^{m_i}$ by permuting the coordinates.
We consider the coordinate-wise action of $\Sl=S_{m_1}\times S_{m_2} \times \cdots \times S_{m_n}$ on $\B^{l}=\B^{m_1}\times \B^{m_2}\times\cdots \B^{m_n}$.
Since the map $\psi$ is invariant under this action,
the binarisation $\B(f)$ has symmetry with respect to this action.
\begin{dfn}
A Boolean network $f': \B^{l}\to \B^{l}$ is said to be \emph{$\Sl$-symmetric} if
\[
f'_{i,j}(x_{1,1},,\ldots, x_{n,m})=
f'_{i,\sigma_i(j)}(x_{1,\sigma_1(1)},x_{1,\sigma_1(2)},\ldots,x_{1,\sigma_1(m_1)},x_{2,\sigma_2(1)},\ldots, x_{n,\sigma_{n}(m_n)})
\]
 for any $\sigma=(\sigma_1,\ldots,\sigma_n)\in \Sl$.
Similarly, a Boolean grid graph $\Gamma'$  is said to be $\Sl$-symmetric
when an edge $x\to x'$ exists if and only if so does $\sigma(x)\to \sigma(x')$.
\end{dfn}
For an $\Sl$-symmetric Boolean evolution function $f'$ 
we obtain a well-defined evolution function $\psi \circ f' \circ \psi^{-1}$. 
Similarly, 
for an $\Sl$-symmetric $\Gamma'$, we obtain a grid graph over $M$ as the image under $\psi$.
\begin{thm}\label{thm:main}
The binarisation induces a bijective mapping
between the set $\Net(M)$ of asymptotic evolution functions on $M$ and
the set $\Net(\B^{m_1+\cdots+m_n})$ of Boolean evolution functions on $\B^{m_1+\cdots+m_n}$ which are $\Sl$-symmetric.
\end{thm}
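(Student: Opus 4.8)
The plan is to exhibit an explicit two-sided inverse of binarisation on the $\Sl$-symmetric locus and to check that both compositions are identities. Write $g:=\B(f)$. First I would verify that binarisation lands among $\Sl$-symmetric functions. That $g$ is a (Boolean) evolution function is immediate from Definition~\ref{dfn:binarisation}, since $g_{i,j}(x)\in\{0,x_{i,j},1\}$. For the symmetry, the decisive observation is that $\psi$ is constant on $\Sl$-orbits, because each coordinate $|y_i|=\sum_k x_{i,k}$ is a symmetric function of the bits of block $i$; hence for $\sigma\in\Sl$, replacing $x$ by its $\sigma$-image changes neither $\psi(x)_i$ nor $f_i(\psi(x))$, so the trichotomy $f_i(\psi(x))\,(<,=,>)\,\psi(x)_i$ that selects the three branches of Definition~\ref{dfn:binarisation} is $\Sl$-invariant. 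In the two extreme branches $g_{i,j}(x)$ is a constant and the symmetry is trivial; in the middle branch $g_{i,j}(x)=x_{i,j}$, and permuting the bits of block $i$ by $\sigma_i$ is compensated exactly by the index shift $j\mapsto\sigma_i(j)$. So $g$ is $\Sl$-symmetric.

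Injectivity then follows from the reconstruction identity~\eqref{eq:bin2m}: since $\psi$ is surjective, for any $y\in M$ one picks $x\in\psi^{-1}(y)$ and recovers $f_i(y)=\sum_{j=1}^{m_i}g_{i,j}(x)$ from $g$, so $\B(f)=\B(f')$ forces $f=f'$.

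For surjectivity, given an $\Sl$-symmetric $g\in\Net(\B^l)$ I would, guided by~\eqref{eq:bin2m}, set $f_i(y):=\sum_{j=1}^{m_i}g_{i,j}(x)$ for an arbitrary $x\in\psi^{-1}(y)$. This is well defined because any two preimages of $y$ have the same number $y_i$ of ones in each block, hence are related by an element of $\Sl$, under which the block-sum $\sum_j g_{i,j}$ is invariant. To see $f$ is asymptotic, fix $x$ and $i$: the elements of $\Sl$ fixing $x$ act transitively on $\{j:x_{i,j}=1\}$ and on its complement, so $\Sl$-symmetry forces $g_{i,j}(x)$ to be constant, say $\alpha$, on the first set and constant, say $\beta$, on the second; thus $\sum_j g_{i,j}(x)=\alpha y_i+\beta(m_i-y_i)$, and exploiting the full $\Sl$-symmetry (to exclude the ``incoherent'' pattern $\alpha=0,\beta=1$) one obtains a value in $\{0,y_i,m_i\}$, so $f\in\Net(M)$. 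Finally $\B(f)=g$: by construction $f_i(\psi(x))=\sum_j g_{i,j}(x)$, and comparing this with $\psi(x)_i$ places us in exactly one of the three branches, where the value of $g_{i,j}(x)$ pinned down above ($0$, $x_{i,j}$, or $1$) coincides with the defining value of $\B(f)_{i,j}(x)$. The bulk of the proof is routine bookkeeping with the $\Sl$-invariance of $\psi$ and identity~\eqref{eq:bin2m}; the one genuinely delicate point is the asymptoticity of the reconstructed $f$ — equivalently, showing that $\Sl$-symmetry of $g$ rules out block-incoherent behaviour such as coordinatewise negation within a block — and this is the step where the precise form of $\Sl$-symmetry, rather than mere invariance under the stabiliser of a point, must be brought to bear.
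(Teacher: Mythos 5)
Your route — check that $\B(f)$ is $\Sl$-symmetric, deduce injectivity from Eq.~\eqref{eq:bin2m}, and invert by block sums $f_i(y):=\sum_{j}g_{i,j}(x)$ for $x\in\psi^{-1}(y)$ (which is exactly the paper's inverse $\psi\circ g\circ\psi^{-1}$) — is the same as the paper's intended argument, and everything up to and including injectivity is sound. The gap sits exactly at the step you yourself flag as delicate: the claim that ``full $\Sl$-symmetry excludes the incoherent pattern $\alpha=0,\beta=1$'' is not merely left unproved, it is false. Blockwise negation is $\Sl$-symmetric: take $n=1$, $m_1=2$, so $M=\{0,1,2\}$, $l=2$, $\Sl=S_2$, and $g(x_1,x_2)=(1-x_1,1-x_2)$. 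Swapping the two inputs together with the two output coordinates leaves $g$ unchanged, so $g\in\Net(\B^2)$ is $\Sl$-symmetric, yet at $x=(1,0)$ your stabiliser analysis gives precisely $(\alpha,\beta)=(0,1)$. Your reconstruction yields $f(y)=2-y$, which happens to be asymptotic, but $\B(f)(1,0)=(1,0)\neq(0,1)=g(1,0)$, so $\B(f)\neq g$; and for blocks with $m_i\ge 3$ the same $g_{i,j}(x)=1-x_{i,j}$ even breaks asymptoticity of the reconstructed function (it gives $f_i(y)=m_i-y_i$). So binarisation is injective but not surjective onto the $\Sl$-symmetric Boolean functions as defined, and no amount of bookkeeping with the group action can close your final step.

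The underlying reason is that the image of binarisation is cut out by a condition strictly stronger than $\Sl$-symmetry: if $g=\B(f)$ then for every $x$ and every block $i$ the vector $(g_{i,1}(x),\ldots,g_{i,m_i}(x))$ must be all $0$, all $1$, or equal to $(x_{i,1},\ldots,x_{i,m_i})$, with the branch depending only on $\psi(x)$; symmetry alone does not enforce this block coherence. To complete a surjectivity proof one must either add such a coherence hypothesis on $g$ (after which the verification $\B(f)=g$ is immediate) or weaken the conclusion to ``binarisation is a bijection onto its image, with $\psi\circ g\circ\psi^{-1}$ as a left inverse''. Note that the paper itself offers no proof of Theorem~\ref{thm:main} beyond the remark that $\psi\circ f'\circ\psi^{-1}$ is well defined for symmetric $f'$, so the obstruction you ran into is genuine and is located exactly where the published statement, with $\Sl$-symmetry as defined, is itself problematic; your injectivity argument and the symmetry of $\B(f)$ are the parts that do survive.
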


It immediately follows that Van Ham's embedding $b_0: M\to \B^{l}$ and our own map $\psi: \B^{l}\to M$
induce graph homomorphisms $\widetilde{b_0}: \Gamma(f) \to \Gamma(\B(f))$ and 
$\tilde{\psi}: \Gamma(\B(f)) \to \Gamma(f)$ for any $f\in \Net(M)$ such that $\tilde{\psi}\circ \widetilde{b_0}$ is the identity.
Thus, the binarisation is neighbour-preserving. We will see that it is also locally, and hence globally as well, regulation-preserving.
We show that the dynamics of the system, namely, attractors in the state transition graph are preserved under binarisation.

In what follows, we often make use of the following two obvious facts:
\begin{lem}\label{lem:Gamma}{\ }\\
\begin{itemize}
\item When there exists an edge $x\to x'$ in $\B(\Gamma)$,
 there exists an edge $\psi(x)\to \psi(x')$ in $\Gamma$.
\item When there exists an edge $y\to y'$ in $\Gamma$,
for any $x\in \psi^{-1}(y)$
 there exists an edge $x\to x'$ in $\B(\Gamma)$ for some $x'\in \psi^{-1}(y')$. 
\end{itemize}
\end{lem}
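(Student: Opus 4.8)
The plan is to read both statements directly off the definition of the binarised grid graph $\B(\Gamma)$ in Definition \ref{dfn:binarisation}, which declares an edge $x\to x'$ to be present in $\B(\Gamma)$ exactly when $d(x,x')=1$ and $\psi(x)\to\psi(x')$ is an edge of $\Gamma$. The first bullet is then nothing but the ``only if'' half of this biconditional: the presence of $x\to x'$ in $\B(\Gamma)$ forces, by definition, the presence of $\psi(x)\to\psi(x')$ in $\Gamma$, so there is nothing further to prove there.

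For the second bullet I would argue by an explicit single-bit-flip construction. Fix an edge $y\to y'$ in $\Gamma$ and a preimage $x\in\psi^{-1}(y)$. Since $\Gamma$ is a grid graph we have $d(y,y')=1$, so $y'=y+\delta e_i$ for a single coordinate $i$ and some $\delta\in\{+1,-1\}$; here validity of $y'$ as a vertex of $M$ forces $y_i<m_i$ when $\delta=+1$ and $y_i>0$ when $\delta=-1$. Writing the $i$-th block of $x$ as $y_i(x)=(x_{i,1},\ldots,x_{i,m_i})$, its coordinate sum is $\psi(x)_i=y_i$ because $x\in\psi^{-1}(y)$. In the case $\delta=+1$, since $y_i<m_i$ this block must contain a coordinate $(i,j)$ with $x_{i,j}=0$, and I set $x'=x+e_{i,j}$, where $e_{i,j}$ denotes the unit vector of $\B^l$ flipping that single bit to $1$. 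In the case $\delta=-1$, since $y_i>0$ there is a coordinate $(i,j)$ with $x_{i,j}=1$, and I set $x'=x-e_{i,j}$. In either case $d(x,x')=1$, and the coordinate sum of the $i$-th block changes by exactly $\delta$ while every other block is untouched, so $\psi(x')=y+\delta e_i=y'$; thus $x'\in\psi^{-1}(y')$. Since $\psi(x)\to\psi(x')$ is the given edge $y\to y'$ of $\Gamma$, Definition \ref{dfn:binarisation} now yields the desired edge $x\to x'$ in $\B(\Gamma)$.

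I do not expect any serious obstacle here; the one point demanding a moment's care is the existence of a bit available to flip, which is exactly where the room conditions $y_i<m_i$ (resp. $y_i>0$) are used, and these are guaranteed by the fact that $y'$ is a legitimate vertex of $M$ adjacent to $y$. The two cases $\delta=\pm 1$ are mirror images of each other, so it suffices to spell one out and note that the other follows by the evident symmetry. Since both facts are invoked only as routine bookkeeping in the later arguments, a short verification at this level of detail is all that is required.
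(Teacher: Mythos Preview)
Your argument is correct. The paper itself gives no proof at all, introducing the lemma as ``two obvious facts'' and moving on; your write-up is simply the natural unpacking of Definition~\ref{dfn:binarisation}, including the one nontrivial check (availability of a bit to flip in the $i$-th block) that justifies calling it obvious.
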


\begin{prop}\label{prop:scc}
The strongly connected components of $\B(\Gamma)$
map surjectively onto those of $\Gamma$ via $\tilde\psi$.
Moreover, attractors of $\B(\Gamma)$ map surjectively onto those of $\Gamma$ via $\tilde\psi$.
\end{prop}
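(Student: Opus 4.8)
The plan is to check that $\tilde\psi$ induces a well-defined, surjective map from the strongly connected components of $\B(\Gamma)$ to those of $\Gamma$, and then to upgrade this to attractors. The starting point is the elementary remark that $\tilde\psi$ carries walks to walks: if $x_0\to x_1\to\cdots\to x_k$ is a directed path in $\B(\Gamma)$, then by the first item of Lemma \ref{lem:Gamma} each $\psi(x_i)\to\psi(x_{i+1})$ is an edge of $\Gamma$, so $\psi(x_0)\to\psi(x_1)\to\cdots\to\psi(x_k)$ is a walk in $\Gamma$; hence the image under $\psi$ of any strongly connected subgraph of $\B(\Gamma)$ is strongly connected in $\Gamma$. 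In particular, for a strongly connected component $D$ of $\B(\Gamma)$ the set $\psi(D)$ lies inside a single strongly connected component of $\Gamma$, which we declare to be the image of $D$; this makes the induced map well defined. Its surjectivity is then immediate: given a strongly connected component $C$ of $\Gamma$, pick $y\in C$ and, since $\psi$ is onto, some $x\in\psi^{-1}(y)$; the component $D$ of $\B(\Gamma)$ containing $x$ satisfies $y\in\psi(D)$, so $D$ maps to $C$.

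For the ``moreover'' part I would first show that the induced map sends attractors to attractors. If $D$ is a terminal strongly connected component of $\B(\Gamma)$, then $\psi(D)$ is strongly connected by the above, and it has no outgoing edge: any edge $\psi(x)\to z$ of $\Gamma$ with $x\in D$ lifts, by the second item of Lemma \ref{lem:Gamma}, to an edge $x\to x'$ of $\B(\Gamma)$ with $\psi(x')=z$, and terminality of $D$ forces $x'\in D$, whence $z\in\psi(D)$. Since a strongly connected set without outgoing edges is automatically maximal among strongly connected sets (a strictly larger one would contain a path leaving it, producing an outgoing edge), $\psi(D)$ is genuinely an attractor of $\Gamma$; in particular the image of $D$ under the induced map is exactly $\psi(D)$.

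Finally, to prove surjectivity onto attractors, take an attractor $C$ of $\Gamma$ and pick $y\in C$, $x\in\psi^{-1}(y)$. As $\B(\Gamma)$ is a finite digraph, following outgoing edges from $x$ reaches some attractor $D$ of $\B(\Gamma)$; projecting the connecting path down shows that $y=\psi(x)$ reaches $\psi(D)$ in $\Gamma$. But $C$ is terminal, so everything reachable from $y$ remains in $C$, giving $\psi(D)\subseteq C$; since $\psi(D)$ is an attractor of $\Gamma$ by the previous step and $C$ is a strongly connected component, this inclusion forces $\psi(D)=C$.

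I do not expect a serious obstacle here; the only points that need attention are the bookkeeping that $\tilde\psi$ really is a graph homomorphism on these state transition graphs — precisely the content of Lemma \ref{lem:Gamma} — together with the two routine facts that any vertex of a finite digraph reaches a terminal strongly connected component and that a terminal strongly connected set is maximal.
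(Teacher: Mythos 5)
Your handling of the ``moreover'' part is correct, and in fact more careful than the paper's own argument: the paper deduces the attractor statement in one line from the SCC statement, whereas you verify directly that the image of a terminal SCC is strongly connected and has no outgoing edge (lifting each outgoing edge of $\Gamma$ by the second item of Lemma \ref{lem:Gamma} and invoking terminality), hence is an attractor, and you obtain surjectivity by driving a point of $\psi^{-1}(y)$, $y$ in a given attractor $C$ of $\Gamma$, into some terminal SCC $D$ of $\B(\Gamma)$ and using terminality of $C$ to force $\psi(D)=C$. That route (terminality plus the standard fact that every vertex of a finite digraph reaches a terminal SCC) is a genuinely different and perfectly valid way to get the attractor statement.

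For the first sentence, however, you prove something weaker than what is claimed. You show only that $\psi(D)$ is \emph{contained} in a single SCC of $\Gamma$ and then \emph{define} the image of $D$ to be that ambient component; the proposition, as the paper's proof makes explicit (``the image of a strongly connected component of $\B(\Gamma)$ is a strongly connected component of $\Gamma$''), asserts that $\tilde\psi(D)$ is exactly an SCC of $\Gamma$. The missing step is the converse lifting: if $y'$ lies on a cycle of $\Gamma$ through $\psi(x)$, one must produce $x'$ in the SCC of $x$ with $\psi(x')=y'$. Terminality, which does this job for attractors, is unavailable for a general SCC. Instead one lifts the cycle starting at $x$ (second item of Lemma \ref{lem:Gamma}); the lifted path ends in the fibre $\psi^{-1}(\psi(x))$, so its endpoint is $\sigma(x)$ for some $\sigma\in\Sl$, and by the $\Sl$-symmetry of $\B(\Gamma)$ one can concatenate the images of the lifted path under powers of $\sigma$ until, after $\mathrm{ord}(\sigma)$ rounds, it closes up at $x$; this cycle passes through a vertex over $y'$, giving $x'$ in the SCC of $x$. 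Without an argument of this kind, your induced map on components is well defined and surjective, but the stated claim that each SCC of $\B(\Gamma)$ maps \emph{onto} an SCC of $\Gamma$ is established only for terminal components.
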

\begin{proof}
Assume that $x,x'\in \B(\Gamma)$ are in the same strongly connected component.
This means, there is a cycle containing $x,x'$ and it maps to a cycle 
containing $\psi(x),\psi(x')\in \Gamma$. Therefore, $\psi(x),\psi(x')$ are in the same strongly connected component.
Conversely, assume that there exists a cycle containing $y,y'\in \Gamma$.
For any vertex $x\in \psi^{-1}(y)$, there exists a vertex $x'\in \psi^{-1}(y')$
and a cycle containing both $x$ and $x'$. 
To sum up, the image of a strongly connected component of $\B(\Gamma)$
is a strongly connected component of $\Gamma$,
and for any strongly connected component of $\Gamma$ there exists a 
strongly connected component of $\B(\Gamma)$ which maps to it.
Since $\tilde\psi$ is a surjective graph homomorphism,
attractors of $\B(\Gamma)$ map surjectively onto those of $\Gamma$ via $\tilde\psi$.
\end{proof}

\begin{cor}\label{cor:attractors}{\ }\\
\begin{itemize}
\item A stable state exists in $\B(\Gamma)$ if and only if it does in $\Gamma$.
\item A cyclic attractor exists in $\B(\Gamma)$ if and only if it does in $\Gamma$.
\end{itemize}
\end{cor}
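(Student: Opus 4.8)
The plan is to deduce both parts of the corollary from Proposition~\ref{prop:scc} together with one elementary observation about the fibres of $\psi$. The observation is this: the binarised grid graph $\B(\Gamma)$ contains no edge between two vertices lying in a common fibre $\psi^{-1}(y)$, $y\in M$. Indeed, if $x\to x'$ were such an edge, then by Definition~\ref{dfn:binarisation} (equivalently, by the first item of Lemma~\ref{lem:Gamma}) there would be an edge $\psi(x)\to\psi(x')$, i.e. $y\to y$, in $\Gamma$; but a grid graph has edges only between vertices at distance one, so a self-loop is impossible. I would record as a consequence that any strongly connected subgraph of $\B(\Gamma)$ with at least two vertices cannot lie inside a single fibre, so its image under $\tilde\psi$ also has at least two vertices, while of course a one-vertex subgraph maps to a one-vertex subgraph.

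For the statement on stable states, I would argue both directions using this. If $\Gamma$ has a stable state $\{y\}$, then $\{y\}$ is an attractor, so by Proposition~\ref{prop:scc} there is an attractor $A'$ of $\B(\Gamma)$ with $\tilde\psi(A')=\{y\}$, hence $A'\subseteq\psi^{-1}(y)$; by the fibre observation $A'$ has no internal edges, so being strongly connected it is a single vertex, i.e. a stable state of $\B(\Gamma)$. Conversely, if $\{x\}$ is a stable state of $\B(\Gamma)$, it is an attractor, so Proposition~\ref{prop:scc} yields that $\tilde\psi(\{x\})=\{\psi(x)\}$ is an attractor of $\Gamma$, and being a single vertex it is a stable state.

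For cyclic attractors I would proceed symmetrically. If $\Gamma$ has a cyclic attractor $A$, so $|A|\ge 2$, Proposition~\ref{prop:scc} gives an attractor $A'$ of $\B(\Gamma)$ with $\tilde\psi(A')=A$; since $\tilde\psi$ is a map, $|A'|\ge|A|\ge 2$, so $A'$ is cyclic. Conversely, if $A'$ is a cyclic attractor of $\B(\Gamma)$ then $|A'|\ge 2$, so by the fibre observation $A'$ is not contained in one fibre and $|\tilde\psi(A')|\ge 2$; since $\tilde\psi(A')$ is an attractor of $\Gamma$ by Proposition~\ref{prop:scc}, it is a cyclic one.

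I do not expect a real obstacle here: all the substantive content—that $\tilde\psi$ sends attractors onto attractors surjectively—is already contained in Proposition~\ref{prop:scc}, and the present corollary only needs the bookkeeping that separates the single-vertex case from the multi-vertex case. The one point deserving care is ruling out a multi-vertex attractor of $\B(\Gamma)$ collapsing under $\tilde\psi$ to a single state of $\Gamma$, and this is exactly what the absence of self-loops (hence of intra-fibre edges) prevents.
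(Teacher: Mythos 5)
Your proof is correct and takes essentially the same route as the paper: both deduce the corollary from Proposition~\ref{prop:scc} plus the observation that a multi-vertex attractor of $\B(\Gamma)$ cannot collapse into a single fibre (you get this from the absence of intra-fibre edges via the no-self-loop property of grid graphs, while the paper notes that distinct points of a fibre are at distance at least two --- the same fact in slightly different form). The only cosmetic difference is that you spell out the stable-state case, which the paper dismisses as trivial.
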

\begin{proof}
The first statement is trivial.
For any cyclic attractor in $\Gamma$, there exists an attractor in $\B(\Gamma)$ which maps to it by the previous proposition.
Since it contains more than one element, it is a cyclic attractor in $\B(\Gamma)$.
Conversely, assume that there is a cyclic attractor in $\B(\Gamma)$. 
It contains at least two elements $x,x'$ with $d(x,x')=1$.
Their images $\psi(x),\psi(x')$ should be different since 
any two distinct elements in a single fibre (the inverse image of a point) $\psi^{-1}(\psi(x))$ have at least distance two. 
Thus, the image of the cyclic attractor contains at least two distinct elements $\psi(x),\psi(x')$ 
and is a cyclic attractor in $\Gamma$.
\end{proof}

\begin{thm}\label{thm:regulatry}
The map $I_\B\to I$ defined by $(i,j_i)\mapsto i$ induces a surjective graph homomorphism 
on $G\B(f)(x) \to Gf(y)$ for $y=\psi(x)$ and any $x\in \B^{l}$. More precisely, the following two statements hold.
\begin{enumerate}
\item At any $y\in M$, if a positive (resp. negative) edge $i\to i'$ exists in $Gf(y)$,
so does a positive (resp. negative) edge $(i,j)\to (i',j')$  in $G\B(f)(x)$ for some $j$ and $j'$
 at any $x\in \psi^{-1}(y)$.
\item At any $x\in \B^{l}$, if 
a positive (resp. negative) edge $(i,j)\to (i',j')$ exists in $G\B(f)(x)$,
so does a positive (resp. negative) edge $i \to i'$ in $Gf(\psi(x))$.
\end{enumerate}
\end{thm}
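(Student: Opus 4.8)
The plan is to translate everything into the language of partial derivatives and then run a short case analysis. Fix $x\in\B^l$, put $y=\psi(x)$, and take indices $(i,j),(i',j')\in I_\B$. The single observation driving the argument is that flipping one Boolean coordinate moves $\psi$ by one unit vector of $M$: if $x_{i,j}=0$ then $\psi(x+e_{i,j})=y+e_i$, and if $x_{i,j}=1$ then $\psi(x-e_{i,j})=y-e_i$, while every other coordinate of $\psi$ is unaffected. Substituting this into Definition \ref{dfn:binarisation} rewrites the Boolean derivative $\partial^{+}_{(i,j)}\B(f)_{(i',j')}(x)$ in terms of the two values $f_{i'}(y)$ and $f_{i'}(y+e_i)$, hence in terms of $\partial^{+}_i f_{i'}(y)$, and likewise $\partial^{-}_{(i,j)}\B(f)_{(i',j')}(x)$ in terms of $\partial^{-}_i f_{i'}(y)$. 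Since a signed edge of an interaction graph is by definition the non-vanishing, with the prescribed sign, of a forward or backward partial derivative, and since $(i,j)\mapsto i$ is obviously surjective on vertices, statement (2) is exactly the assertion that this index map sends signed edges to signed edges, and statement (1) is exactly its surjectivity on signed edges; so it suffices to prove (1) and (2).

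I would do statement (2) first, as it is the easy half. Suppose, say, a positive edge $(i,j)\to(i',j')$ in $G\B(f)(x)$ witnessed by $\partial^{+}_{(i,j)}\B(f)_{(i',j')}(x)>0$, so $x_{i,j}=0$, $\B(f)_{i',j'}(x)=0$ and $\B(f)_{i',j'}(x+e_{i,j})=1$. From Definition \ref{dfn:binarisation}, $\B(f)_{i',j'}(x)=0$ gives $f_{i'}(y)\le\psi(x)_{i'}$ and $\B(f)_{i',j'}(x+e_{i,j})=1$ gives $f_{i'}(y+e_i)\ge\psi(x+e_{i,j})_{i'}$; as $\psi(x+e_{i,j})_{i'}-\psi(x)_{i'}$ is $0$ for $i'\ne i$ and $1$ for $i'=i$, and since in the borderline cases where these two estimates would be equalities the flipped coordinate $x_{i',j'}$ forbids them from coinciding (else the Boolean derivative would vanish), one gets $f_{i'}(y+e_i)>f_{i'}(y)$, i.e.\ a positive edge $i\to i'$ in $Gf(y)$. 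The backward witness and the negative cases are symmetric. Asymptoticity of $f$ is not used here: a monotone change stays monotone under coarsening.

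Statement (1) is the substantive half, and here I would exploit asymptoticity. Suppose a positive edge $i\to i'$ in $Gf(y)$, say from $\partial^{+}_i f_{i'}(y)>0$, so $y_i<m_i$ and $f_{i'}(y+e_i)>f_{i'}(y)$. Because $\B(f)$ is $\Sl$-symmetric and $\Sl$ acts transitively on the fibre $\psi^{-1}(y)$, the pattern of signed edges between the $i$-block and the $i'$-block is the same at every point of the fibre up to a permutation of labels, so I would fix one representative, the Van Ham point $x=b_0(y)$, for which the entries of each block are explicitly known, and recover the general case by applying a suitable $\sigma\in\Sl$. Choose $j$ with $x_{i,j}=0$ (possible as $y_i<m_i$). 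Now asymptoticity pins down $f_{i'}(y)\in\{0,y_{i'},m_{i'}\}$ and $f_{i'}(y+e_i)\in\{0,(y+e_i)_{i'},m_{i'}\}$, so the strict inequality $f_{i'}(y+e_i)>f_{i'}(y)$ leaves only a few value-pairs; for each I would pick $j'$ so that $x_{i',j'}$ takes the value making $\B(f)_{i',j'}(x+e_{i,j})-\B(f)_{i',j'}(x)>0$, such a $j'$ being available because the constraint on $y_{i'}$ that its existence needs (namely $y_{i'}\ge 1$, or $y_{i'}<m_{i'}$, according to the pair) is already forced by that very pair. In the self-loop case $i'=i$ one must sometimes also take $j'\ne j$, which follows from counting how many entries of the $i$-block are set. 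The backward and negative cases are symmetric.

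The hard part is direction (1), and the subtlety it guards against is exactly the one that makes asymptoticity necessary: for a general $f$ a nontrivial multilevel interaction can become invisible after binarisation when two distinct intermediate target values collapse to the same Boolean configuration, and confining $f_{i'}$ to $\{0,y_{i'},m_{i'}\}$ is what rules this out. Within (1) the most delicate point is the self-loop sub-case, because flipping $x_{i,j}$ both shifts the target of gene $i$ and, when $j'=j$, flips the very coordinate whose binarised value is being tracked; getting the choice of $j'$ right there, after setting up the count of how many coordinates of a block equal $1$, is the one place that really needs care. Everything else is bookkeeping.
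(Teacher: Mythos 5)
Your proposal is correct, but for the substantive half (statement (1)) it takes a genuinely different route from the paper. The paper's entire proof of (1) is a pigeonhole argument through the identity \eqref{eq:bin2m}: since $f$ is asymptotic, $f_{i'}(\psi(x))=\sum_{k}\B(f)_{i',k}(x)$, so applying this at $x$ and at $x\pm e_{i,j}$ (with $j$ chosen so that $x_{i,j}$ has the right value, which exists because $y_i>0$, resp.\ $y_i<m_i$) turns the strict inequality $\partial^{\pm}_i f_{i'}(y)\neq 0$ into a strict change of a sum of Boolean values, forcing some summand $j'$ to change in the required direction. This works directly at \emph{every} point of the fibre, with no case analysis and no symmetry reduction. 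You instead reduce to the Van Ham representative via $\Sl$-symmetry and enumerate the value pairs $(f_{i'}(y),f_{i'}(y\pm e_i))$ permitted by asymptoticity, choosing $j'$ case by case; this is valid (the symmetry does transport signed edges blockwise, and in each pair the needed coordinate value in block $i'$ exists precisely because the pair forces $y_{i'}\ge 1$ or $y_{i'}<m_{i'}$, with the self-loop case $i'=i$ occasionally forcing $j'\neq j$, as you note), but it is more laborious, the cases are left as a plan rather than executed, and the reduction to $b_0(y)$ is actually unnecessary since the choice of $j'$ never uses which particular coordinates of the block are set. What your route buys is an explicit view of where asymptoticity matters (it pins the value pairs); what the paper's route buys is that asymptoticity is already packaged once and for all in \eqref{eq:bin2m}. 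For statement (2) the two arguments are essentially the same local computation; one small wording issue: for $i'\neq i$ the coordinate $x_{i',j'}$ is \emph{not} flipped, and the correct reason the double-equality borderline is excluded is that then $\B(f)_{i',j'}$ would equal the unchanged $x_{i',j'}$ on both sides, making the Boolean derivative vanish --- your phrase ``the flipped coordinate $x_{i',j'}$ forbids them from coinciding'' should be restated this way, but the conclusion $f_{i'}(y+e_i)>f_{i'}(y)$ stands.
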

\begin{proof}

\begin{enumerate}
We only show the statements for the case of a positive edge, as the case of a negative edge follows by a similar argument.

For the first statement, assume that there exists a positive edge $i\to i'$ in $Gf(y)$.
We have two cases $\partial^+_i f_{i'}(y)>0$ and $\partial_i^- f_{i'}(y)>0$.
When $\partial^-_i f_{i'}(y)=f_{i'}(y)-f_{i'}(y-e_i)>0$,
for any $x\in \psi^{-1}(y)$ there exists $j$ such that $x_{i,j}=1$ since $y_i>0$.
By Eq. \eqref{eq:bin2m} there must exist some $j'$ such that $\B(f)_{i',j'}(x)-\B(f)_{i',j'}(x-e_{i,j})=1$.
This means there exists a positive edge $(i,j)\to (i',j')$ in $G\B(f)(x)$.
A similar argument applies when $\partial_i^+ f_{i'}(y)>0$.

For the second statement, assume that there exists a positive edge $(i,j)\to (i',j')$ in $G\B(f)(x)$.
When $x_{i,j}=0$, this means 
$\B(f)_{i',j'}(x)=0$ and $\B(f)_{i',j'}(x+e_{i,j})=1$.
Since $\psi(x)+e_{i}=\psi(x+e_{i,j})$, we have
 $f_{i'}(\psi(x)+e_i)-f_{i'}(\psi(x))=\sum_{k=1}^{m_{i'}} \left(\B(f)_{i',k}(x+e_{i,j}) -  \B(f)_{i',k}(x) \right)$.
 Since $\B(f)_{i',k}(x+e_{i,j}) -  \B(f)_{i',k}(x) \ge 0$ for all $k$ and $\B(f)_{i',j'}(x+e_{i,j}) -  \B(f)_{i',j'}(x)=1$,
 we have $f_{i'}(\psi(x)+e_i)-f_{i'}(\psi(x))>0$.
This in turn means that there exists a positive edge $i\to i'$ in $Gf(\psi(x))$.
A similar argument applies to the case when $x_{i,j}=1$.
\end{enumerate}
\end{proof}
Intuitively speaking, (1) says all the regulation in the original multilevel network is captured
 in the converted Boolean network, while
(2) says all the regulation in the converted network comes from the original multilevel network.

\begin{cor}\label{cor:circuit}
An asymptotic evolution function $f$ over $M$ has a positive (negative)
type-1 functional circuit if so does its binarisation $\B(f)$.
\end{cor}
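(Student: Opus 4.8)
The plan is to carry a type-1 functional circuit of $\B(f)$ over to $f$ along the projection $\pi\colon I_\B\to I$, $(i,j)\mapsto i$. Suppose $C$ is a positive (resp.\ negative) type-1 functional circuit of $\B(f)$; by definition there is a single $x\in\B^l$ with all edges of $C$ lying in $G\B(f)(x)$, and we set $y=\psi(x)$. Write $C$ as a cyclic sequence of edges $(i_0,j_0)\to(i_1,j_1)\to\cdots\to(i_{r-1},j_{r-1})\to(i_0,j_0)$ carrying signs $\varepsilon_0,\dots,\varepsilon_{r-1}$. By Theorem~\ref{thm:regulatry}(2), each edge $(i_s,j_s)\to(i_{s+1},j_{s+1})$ of $C$ produces an edge $i_s\to i_{s+1}$ of $Gf(y)$ with the same sign $\varepsilon_s$. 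Hence $\pi$ sends $C$ to a closed walk $W=(i_0\to i_1\to\cdots\to i_0)$ in $Gf(y)$, all of whose edges are functional at the single point $y$, and carrying exactly the same multiset of signs as $C$; in particular the parity of the number of negative edges is unchanged.

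If $\pi$ is injective on the vertex set of $C$, then $W$ is itself a simple cycle, hence a type-1 functional circuit of $f$ with the same parity of negative edges as $C$, and we are done. The real work is the general case, where $C$ contains two vertices with the same first index and $W$ is only a closed walk. I would then peel off simple cycles from $W$: choose a pair $s<t$ with $i_s=i_t$ and $t-s$ minimal, so that $i_s\to i_{s+1}\to\cdots\to i_t$ has distinct intermediate vertices and is a simple cycle $D$, still functional at $y$; if $D$ already has the sought parity we stop, and otherwise we excise $D$ from $W$ and repeat with the strictly shorter closed walk that remains, until we reach a simple cycle.

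The delicate point is keeping track of the parity through this peeling, and here the two cases behave differently. For a negative circuit it goes through cleanly: deleting a simple cycle changes the number of negative edges of the walk by the number inside that cycle, so a walk with an odd number of negative edges either yields a negative simple cycle immediately or, after deleting a positive one, again has an odd number of negative edges; the process therefore terminates at a simple cycle with an odd number of negative edges, i.e.\ a negative type-1 functional circuit of $f$. For a positive circuit this bookkeeping does not force a positive simple cycle — a positive closed walk may be the edge-union of two negative simple cycles meeting only at the repeated vertex, with no positive simple cycle among its edges — so the positive half of the statement is the genuinely hard part, and to obtain it one must either show that such a configuration cannot be the $\pi$-image of a type-1 functional circuit of a binarisation, or else restrict attention to the negative case, which is the one actually used in the application to Thomas' conjecture.
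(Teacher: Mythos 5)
Your projection step is precisely how the paper obtains the corollary: no separate proof is given there, the statement being presented as an immediate consequence of Theorem~\ref{thm:regulatry}(2), with only the remark about composed self-loops appended. Your treatment of the negative case --- project the type-1 circuit of $\B(f)$ to a closed walk at $y=\psi(x)$ carrying the same signs, then peel off elementary cycles and use additivity of the negative-edge parity --- is a correct and complete argument, and it is in fact the rigorous version of what the paper leaves implicit; it is also the only half of the corollary that the paper uses (the counter-example in \S\ref{ex:counter-example} needs the negative case, in contrapositive form).

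Your worry about the positive case is well founded, but the way out is not the one you propose: the configuration you fear cannot be excluded, because it actually occurs for binarisations. The paper's first example following the corollary, $f(y)=2-y$ on $M=\{0,1,2\}$, realises it: at $x=(0,0)$ the graph $G\B(f)(x)$ consists of the two negative edges $x_{11}\to x_{12}$ and $x_{12}\to x_{11}$, whose union is a positive type-1 functional circuit of $\B(f)$, while $Gf(0)$ (indeed $Gf(y)$ for every $y$) contains only a negative self-loop, so $f$ has no positive elementary circuit whatsoever. Hence, if ``circuit'' is read strictly as an elementary cycle, the positive half of the statement fails, and the remark printed right after the corollary is the paper's acknowledgement of this: a positive Boolean circuit between two coordinates $(i,j)$ and $(i,j')$ sharing the same first index is matched with the negative self-loop on $i$ composed with itself, i.e.\ with a closed walk on the multilevel side. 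Under that liberal reading the positive case is immediate (your projected walk $W$ is itself a positive closed walk functional at $y$), and under the strict reading one must restrict to negative circuits, exactly as you suggest. So your proof, with the negative case proved and the positive case either dropped or handled by the walk convention, is as strong a version of the corollary as is actually true, and it suffices for the paper's application.
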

Note that two negative arrows $(i,j)\to (i,j')$ and $(i,j')\to (i,j)$ in $G\B(f)(x)$,
both of which correspond to a negative self-loop $i\to i$ in $G\B(f)(x)$,
can be composed to produce a positive circuit.
This positive functional type-1 circuit corresponds
the one which is the composition of a single negative self-circuit with itself in $Gf(\psi(x))$.


\begin{ex}
We give two characteristic examples of the binarisation.

Consider the evolution function $f(y)=2-y$ over $M=\{0,1,2\}$.
Its binarisation is 
\[
\B(f)(x)=\begin{cases} (1,1) & (x=(0,0)) \\
x & (x=(1,0), (0,1)) \\ (0,0) & (x=(1,1)). \end{cases}
\]
The corresponding state transition graphs are
\[
\xymatrix{
{0} \ar[r] & 1 & 2\ar[l]
}, 
\qquad
\xymatrix{
& 10 & \\
{00} \ar[ru] \ar[rd] & & 11 \ar[lu] \ar[ld] \\
& 01 & 
}
\]
The interaction graphs at $y=0$ and $x=(0,0)$ respectively look:
\[
Gf(0)=
\xymatrix{
y \ar@{->}@(ul,ur)^{-}
},
\qquad
G\B(f)(0,0)=\xymatrix{
x_{11} \ar@/^1pc/[r]^{-} & x_{12} \ar@/^1pc/[l]^{-}
}
\]
Notice that the self-loop on $y_1$ corresponds to each of the edges $x_{11}\to x_{12}$ and $x_{11}\leftarrow x_{12}$.
In particular, the converse to Corollary \ref{cor:circuit} does not hold.

Consider another evolution function  over $M=\{0,1,2\}$ defined by $f(y)=\begin{cases} y & (y=0,1) \\ 0 & (y=2). \end{cases}$
Its binarisation is $\B(f)(x)=\begin{cases} x & (x\neq(1,1)) \\ (0,0) & (x=(1,1)). \end{cases}$
The corresponding state transition graphs are
\[
\xymatrix{
0 \ar@{--}[r] & 1 & 2\ar[l]
}, 
\qquad
\xymatrix{
& 10 & \\
{00} \ar@{--}[ru] \ar@{--}[rd] & & {11} \ar[lu] \ar[ld] \\
& 01 & 
}
\]
The interaction graphs at $y=1$ and $x=(0,1)$  respectively look:
\[
Gf(1)=
\xymatrix{
y \ar@{->}@(ul,ur)^{-} \ar@{->}@(dl,dr)_{+}
},
\qquad
G\B(f)(0,1)=
\xymatrix{
x_{11} \ar@/^1pc/[r]^{-}  & x_{12}  \ar@{->}@(dr,ur)_{+}
}
\]
\end{ex}

\subsection{Another extension method}
Recently, Tonello \cite{Tonello2017} has independently
constructed a mapping which also extends Van Ham's while preserving the dynamics 
and the local regulations in a more stringent sense than ours.
Her method was used to produce a counter-example to Conjecture \ref{negative-circuit-conjecture} as well.
Her mapping can be described in our context as follows:
\[
 f \mapsto b_0 \circ f \circ \psi,
\]
where $f$ is a stepwise function.
{Compared to ours, her method} yields {fewer} 
arrows in the state transition graph.
Her strategy was to stipulate
the converted function to take values in the admissible region $\mathrm{Im}(b_0)$,
 whereas ours was to equip the converted function with the symmetry described in Theorem \ref{thm:main}.

\section{Results}
\subsection{Lambda phage}
As an illustration, we first apply our method to the 2-variable lambda phage model proposed by Thieffry and Thomas \cite{Thieffry1995}.

The lambda phage is a bacterial virus that infects \emph{E. coli}. It is a temperate phage, \emph{i.e.} it can either multiply and eventually kill the host cell (\emph{lytic} phase), or integrate its DNA into the bacterial chromosome (\emph{lysogenic} phase), conferring the cell immunity against super-infection by other lambda phage. The switch between lysis and lysogeny, as modelled by Thieffry and Thomas, is essentially controlled by a positive feedback circuit between genes $cI$ and $cro$. The two genes inhibit each other, such that $cI$ dominates the lysogenic phase, whereas $cro$ is active during the lytic phase. The gene cro further inhibits its own activity. The system is modelled by a discrete system with the state space $M=\{(cI,cro)\in \{0,1\}\times \{0,1,2\} \}$. The dynamics displays a stable state with high $cI$ and low $cro$ activity, and a two-state cyclic attractor with low $cI$, and $cro$ oscillating around its activity threshold (Fig.~\ref{fig:Phage2STG}, left), which the authors describe as homeostasis.

\begin{figure}[!ht]
   \includegraphics[width=0.8\linewidth,keepaspectratio=true]{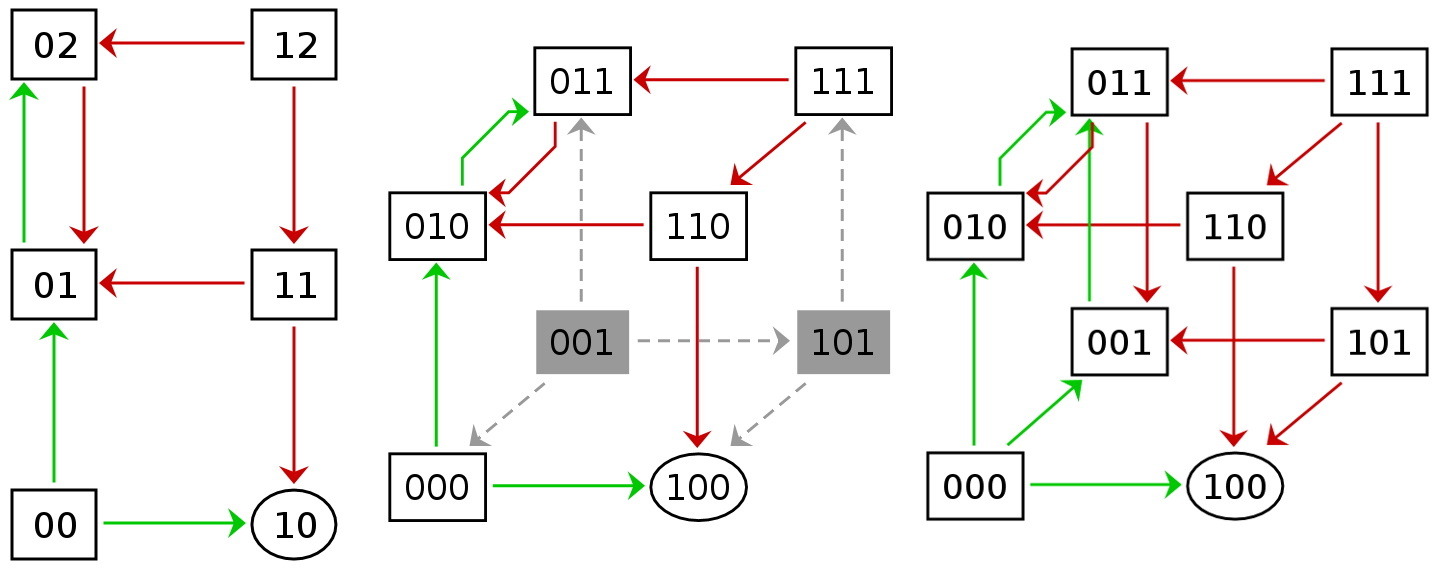}
 \caption{\csentence{State transition graphs for the lambda phage model.}
  Left, the original, multilevel model \cite{Thieffry1995}; centre, the Boolean version obtained using Van Ham's method; right, the Boolean version obtained using our method.}    
  \label{fig:Phage2STG}
\end{figure}

Table \ref{tab:lambdaPhage} shows how the same dynamics can be encoded using a stepwise or asymptotic evolution function. Notice that the stepwise function creates a positive feedback on $cro$ ($\hat{f}_{cro}(0,1)-\hat{f}_{cro}(0,0)>0$, and $\hat{f}_{cro}(1,2)-\hat{f}_{cro}(1,1)>0$) that is not visible in the asymptotic function. This difference in the global regulatory graphs is shown in Fig.~\ref{fig:Phage2global}.

\begin{table}[h!]
\centering
\caption{Truth table of the lambda phage model} \label{tab:lambdaPhage}
\begin{tabular}{cc|cc|cc}
$cI$ & $cro$ & $\hat{f}_{cI}$ & $\hat{f}_{cro}$ & $\bar{f}_{cI}$ & $\bar{f}_{cro}$ \\
\hline
0 & 0 & 1 & 1 & 1 & 2 \\
0 & 1 & 0 & 2 & 0 & 2 \\
0 & 2 & 0 & 1 & 0 & 0 \\
1 & 0 & 1 & 0 & 1 & 0 \\
1 & 1 & 0 & 0 & 0 & 0 \\
1 & 2 & 0 & 1 & 0 & 0 \\
\end{tabular}
\begin{flushleft} Table gives the target level of $cI$ and $cro$ in each state, encoded using either the stepwise or asymptotic evolution function. 
\end{flushleft}
\label{table1}
\end{table}

Boolean systems are generated by Van Ham's method and ours with the state space $\{(cI,cro1,cro2)\in \B^3\}$. 
However, since Van Ham's method yields a dynamics with as many states as the original, multilevel dynamics (Fig.~\ref{fig:Phage2STG}, centre), the system thus obtained includes a ''non-admissible'' region (grey area in the Figure) whose states do not have any counterpart in the multilevel model.
To make comparison, we extended the domain of the Boolean model obtained by Van Ham's method
(Fig.~\ref{fig:Phage2STG}, centre) by completing the grey dashed arrows in such a way that 
\begin{itemize}
\item it does not create any extra arrows in the global interaction graph that is not already visible elsewhere within the admissible region
\item and there is no outgoing arrow in the state transition graph from an admissible state to a non-admissible state.
\end{itemize}
This extension is based on our understanding of Van Ham's original publication~\cite{VanHam}.
The corresponding global regulatory graph is shown in Fig.~\ref{fig:Phage2global} (centre).

\begin{figure}[!ht]
   \includegraphics[width=0.6\linewidth,keepaspectratio=true]{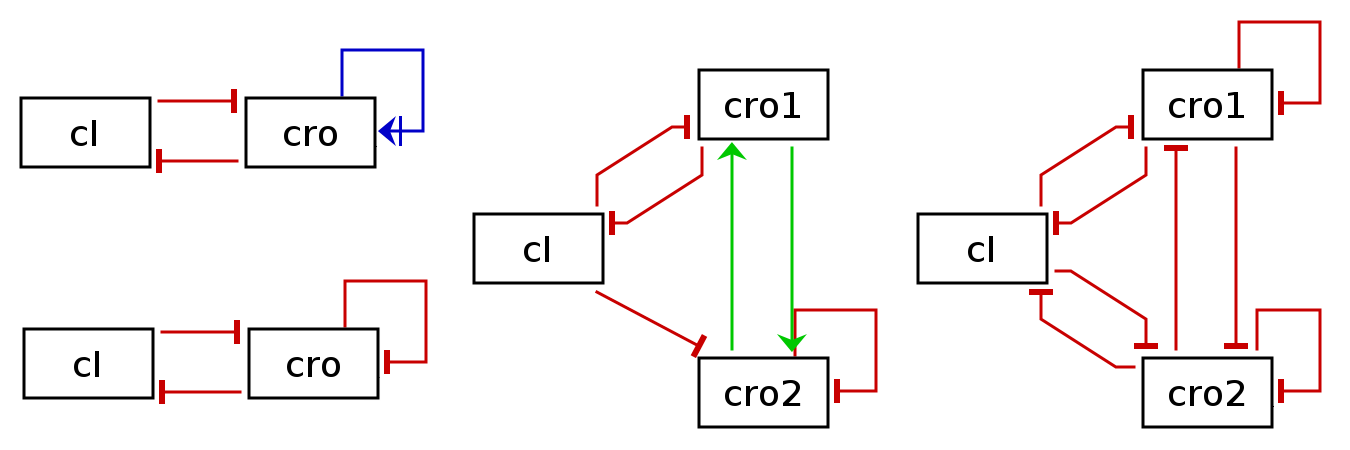}
 \caption{\csentence{Global interaction graphs for the lambda phage model.}
 Left, the original, multilevel model \cite{Thieffry1995}: top, stepwise evolution function, bottom, asymptotic evolution function; centre, the Boolean version obtained using Van Ham's method; right, the Boolean version obtained using our method.
 }  \label{fig:Phage2global}
\end{figure}

In contrast, the dynamics produced by our method is more complex, and it occupies the whole state space:
every state has a counterpart in the original multilevel model. However, the two-state attractor of the original model is now represented by a three-state attractor, where state $011$ corresponds to state $02$ and both states $001$ and $010$ correspond to state $01$. Another difference is that, in the model obtained with Van Ham's method, variables $cro1$ and $cro2$ are ordered and represent levels 1 and 2, respectively, of the original multilevel $cro$. In the Boolean model generated with our method, $cro1$ and $cro2$ are interchangeable and equivalent: whether one represents level 1 or 2 depends on the other, and thus, on the context of each particular state.

Finally, an important difference appears at the local level (Fig.~ \ref{fig:Phage2local}). Using Van Ham's method, local graphs may include edges that have no visible counterpart in the local graph of the corresponding multilevel state. For example, while in $02$ the only visible regulation is the negative loop on $cro$, Van Ham's method adds an edge from $cro1$ to $cI$ in $011$, whereas the corresponding local graph obtained using our method includes only regulations between $cro1$ and $cro2$. Similarly, in $10$, the only visible edges occur between $cI$ and $cro$, and the same is true in $100$ using our method; however, using Van Ham's method an additional edge between $cro1$ and $cro2$ becomes visible.
{Our method generates an ``extra'' positive circuit between $cro1$ and $cro2$,
which in multilevel model corresponds to the composition of the negative self circuit on $cro$ with itself.
Such positive circuits can only appear between variables that represent the same multilevel variable
(see Corollary \ref{cor:circuit}).}

\begin{figure}[!ht]
   \includegraphics[width=0.6\linewidth,keepaspectratio=true]{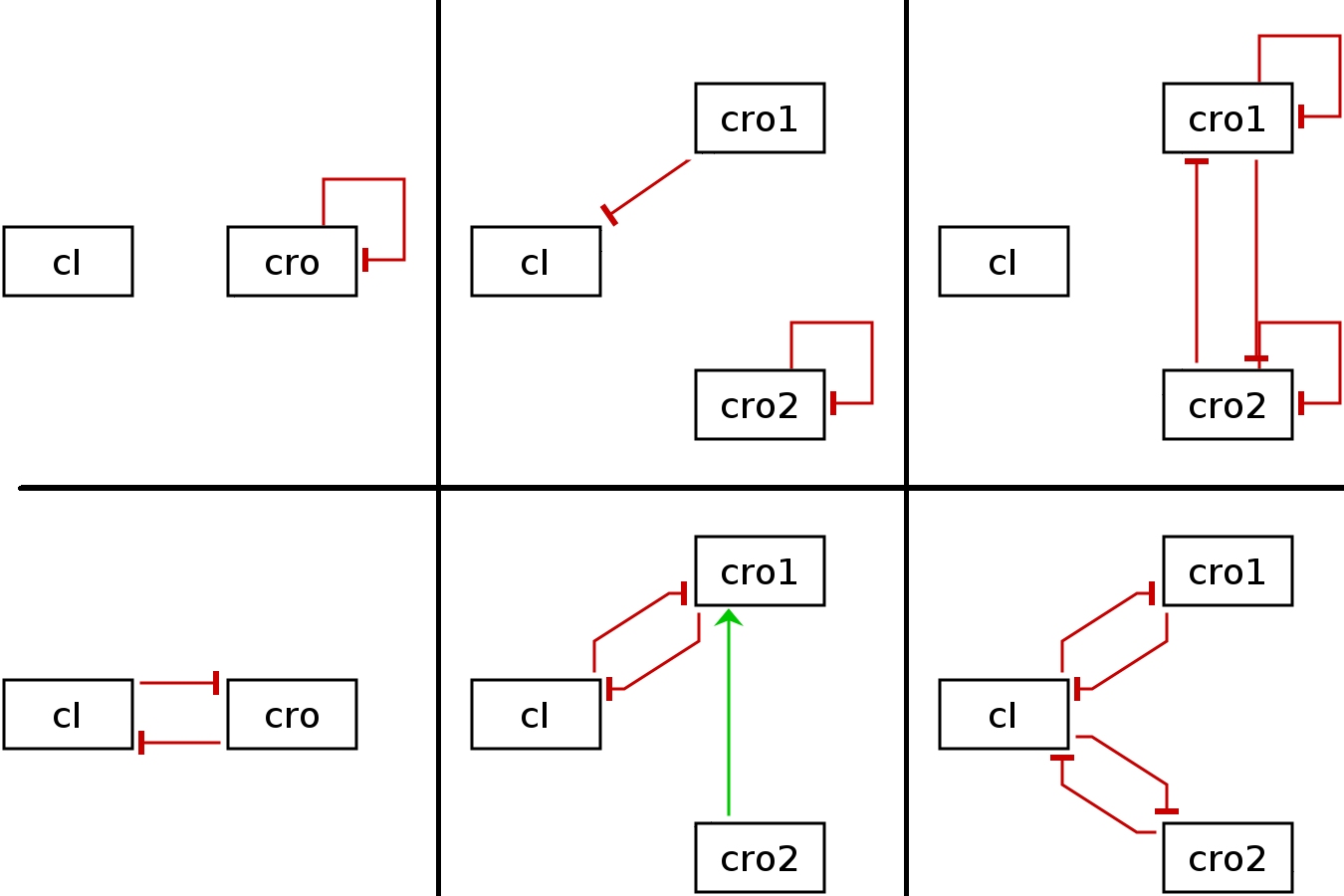}
 \caption{\csentence{Local interaction graphs for the lambda phage model.}
 Left, the original, multilevel model \cite{Thieffry1995}; centre, the Boolean version obtained using Van Ham's method; right, the Boolean version obtained using our method. Top, local graphs for low $cI$ and high $cro$ (states $02$ or $011$); bottom, local graphs for states with high $cI$ and low $cro$ (states $10$ and $100$). Local graphs in $02$ and $10$ are identical under the stepwise or asymptotic evolution function.}    \label{fig:Phage2local}
\end{figure}

\subsection{Boolean counter-example to Conjecture \ref{negative-circuit-conjecture}}\label{ex:counter-example}
One of the main goals in genetic regulatory network analysis is
to find relations between circuits in the interaction graphs and 
attractors in the state transition graphs.
Here we list a few known results in this direction:

	\begin{enumerate}
		\item If $f$ has no type-1 functional circuit, $\Gamma(f)$ has a unique stable state $x$ (\cite{Shih2005}).
		\item If $f$ has no type-1 functional positive circuit, $\Gamma(f)$ has a unique attractor (\cite{Richard2007,Remy2008}).
		\item If there exists a cyclic attractor in $\Gamma(f)$, $Gf(M)$ must contain a negative circuit
		(\cite{Richard2010}).
	\end{enumerate}
Notice that the first two statements connect the asymptotic behaviour of the dynamics with
the {\em local} interaction graph at some $x\in M$, while the last one does so with the \emph{global} interaction graph $Gf(M)$.
This naturally gives rise to the following conjectures:
\begin{conj}[{\cite[Question 1]{Comet2013},\cite[Question 1,2]{Richard2010}}]\label{negative-circuit-conjecture} {\ }\\
\begin{enumerate}
\item If $f$ has no type-1 functional negative circuit, then $\Gamma(f)$ has no cyclic attractors.
\item If $f$ has no type-1 functional negative circuit, then $\Gamma(f)$ has a stable state.
\end{enumerate}
\end{conj}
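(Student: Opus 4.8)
The plan is not to prove Conjecture \ref{negative-circuit-conjecture} but to refute it, by transporting a known multilevel counter-example into the Boolean setting through the binarisation of Definition \ref{dfn:binarisation}. Concretely, I would start from the multilevel asymptotic evolution function $f$ on some $M=\{0,\ldots,m_1\}\times\cdots\times\{0,\ldots,m_n\}$ exhibited by Richard and Comet in \cite{Richard2010}: its state transition graph $\Gamma(f)$ contains a cyclic attractor, yet $f$ possesses no type-1 functional negative circuit --- every negative circuit of the global interaction graph $Gf(M)$ fails to sit inside a single local graph $Gf(y)$. The candidate Boolean counter-example is then simply the binarisation $\B(f)$ on $\B^{m_1+\cdots+m_n}$.

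Two properties of $\B(f)$ then have to be verified, and both are immediate from results already proved. First, sustained oscillation is inherited: by Corollary \ref{cor:attractors} the graph $\Gamma(\B(f))$ has a cyclic attractor because $\Gamma(f)$ does; and, by the same corollary, if the chosen multilevel example additionally has no stable state then neither does $\B(f)$, which refutes clause (2) as well as clause (1). Second --- and this is the substantive point --- $\B(f)$ inherits the absence of a type-1 functional negative circuit: this is exactly the contrapositive of the negative case of Corollary \ref{cor:circuit}. Together these two facts make $\B(f)$ violate Conjecture \ref{negative-circuit-conjecture}.

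The step I expect to require the most care is the second inheritance statement, because the index map $I_\B\to I$, $(i,j)\mapsto i$, of Theorem \ref{thm:regulatry} is not injective, so a circuit of $G\B(f)(x)$ maps only to a \emph{closed walk} of $Gf(\psi(x))$ rather than to a circuit. The resolution --- the parity argument underlying Corollary \ref{cor:circuit} --- is: by Theorem \ref{thm:regulatry}(2) each arc of the circuit in $G\B(f)(x)$ maps to an arc of $Gf(\psi(x))$ of the same sign, so the image closed walk contains an odd number of negative arcs; decomposing that walk into cycles, at least one cycle again has an odd number of negative arcs and lies entirely inside $Gf(\psi(x))$, contradicting the hypothesis that $f$ has no type-1 functional negative circuit. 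One must note here that the symmetric construction \emph{does} manufacture new \emph{positive} circuits --- a mutually inverse pair $(i,j)\leftrightarrow(i,j')$ coming from a negative self-loop $i\to i$ --- but being positive these are harmless for the argument above.

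Finally, a practical remark: since $\B(f)$ lives on an exponentially larger state space, one would not want to certify either the cyclic attractor or the circuit condition by inspecting $\B(f)$ directly; the content of Corollaries \ref{cor:attractors} and \ref{cor:circuit}, supported by the software implementation, is precisely that both can be read off the original multilevel model. The remaining work is therefore mostly bookkeeping: write out the truth table of the Richard--Comet model and of its binarisation, and, using Lemma \ref{lem:Gamma}, lift one cycle of the multilevel attractor through $\psi$ so as to display a concrete, self-contained oscillating trajectory in $\Gamma(\B(f))$.
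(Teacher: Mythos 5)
Your proposal is correct and follows essentially the same route as the paper: the paper refutes Conjecture \ref{negative-circuit-conjecture} by binarising the Richard--Comet multilevel counter-example and invoking Corollary \ref{cor:attractors} (no stable state, cyclic attractor persists) together with the contrapositive of Corollary \ref{cor:circuit} (no type-1 functional negative circuit in $\B(\bar{f}^\Gamma)$). Your added parity/closed-walk argument is a reasonable elaboration of why Corollary \ref{cor:circuit} holds, and your remark about the harmless extra positive circuits matches the paper's own observation following that corollary.
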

Note that the second statement is weaker in the sense that it follows from the first one.

Richard, together with Comet \cite[Example 6]{Richard2010}, gave a counter-example to 
the conjectures when $M=\{0,1,2,3\}^2$. 
The grid graph $\Gamma$ over $M=\{0,1,2,3\}^2$ given in \cite[Example 6]{Richard2010}
has no stable state and there exists no type-1 negative functional circuit in the interaction graph of $\bar{f}^\Gamma$ 
(see Fig. \ref{fig:richard}).
The existence of Boolean counter-examples was left as an open problem in \cite{Richard2010}.

\begin{figure}[!ht]
   \includegraphics[width=0.4\linewidth,keepaspectratio=true]{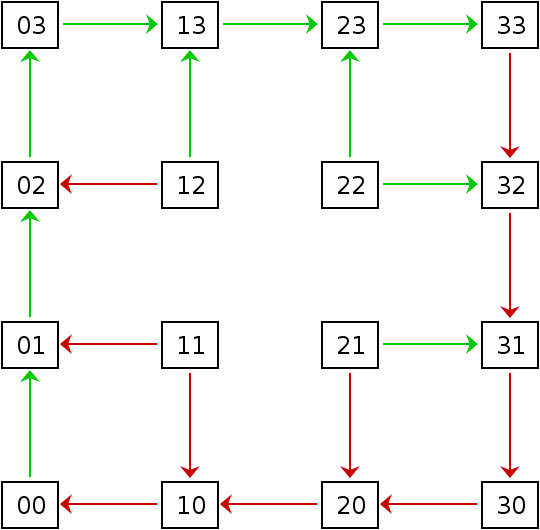}
 \caption{\csentence{A multilevel counter-example given in \cite{Richard2010}}}\label{fig:richard}
\end{figure}

By Corollary \ref{cor:circuit}, the Boolean grid graph $\B(\Gamma)$ yielded by our method (Fig. \ref{fig:ours})
has no type-1 negative functional circuit in the interaction graph of the binarisation $\B(\bar{f}^\Gamma)$.
Furthermore, by Corollary \ref{cor:attractors} there is no stable state in the state transition graph of $\B(\bar{f}^\Gamma)$.
Thus, we obtain a new Boolean counter-example to Conjecture \ref{negative-circuit-conjecture}.

\begin{figure}[!ht]
   \includegraphics[width=1.0\linewidth,keepaspectratio=true]{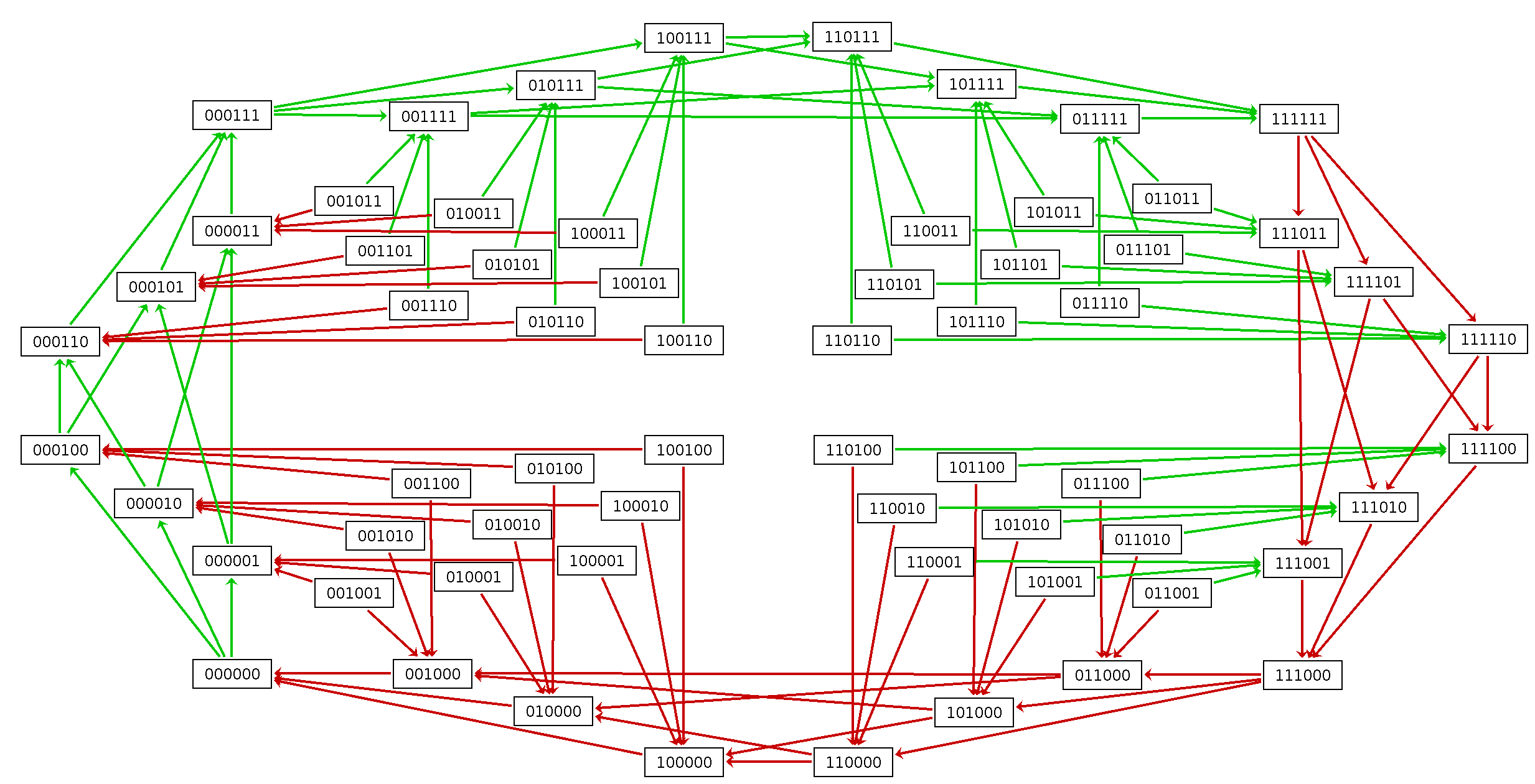}
 \caption{\csentence{A Boolean counter-example to Conjecture \ref{negative-circuit-conjecture}}
 The binarisation of the multilevel counter-examples gives a Boolean counter-example}\label{fig:ours}
\end{figure}

Finally, we note that Ruet recently gave a systematic way to produce counter-examples to the conjectures for the Boolean case~\cite{Ruet2017}.
Our method is very different from Ruet's and while his example possesses a special property that
it has an {\em attractive cycle}, our counter-example has a smaller dimension of $6$.
For comparison, we include an example based on Ruet's theory \cite{Ruet2017} Fig.~\ref{SI_Fig}.

\begin{figure}[!ht]
   \includegraphics[width=1.0\linewidth,keepaspectratio=true]{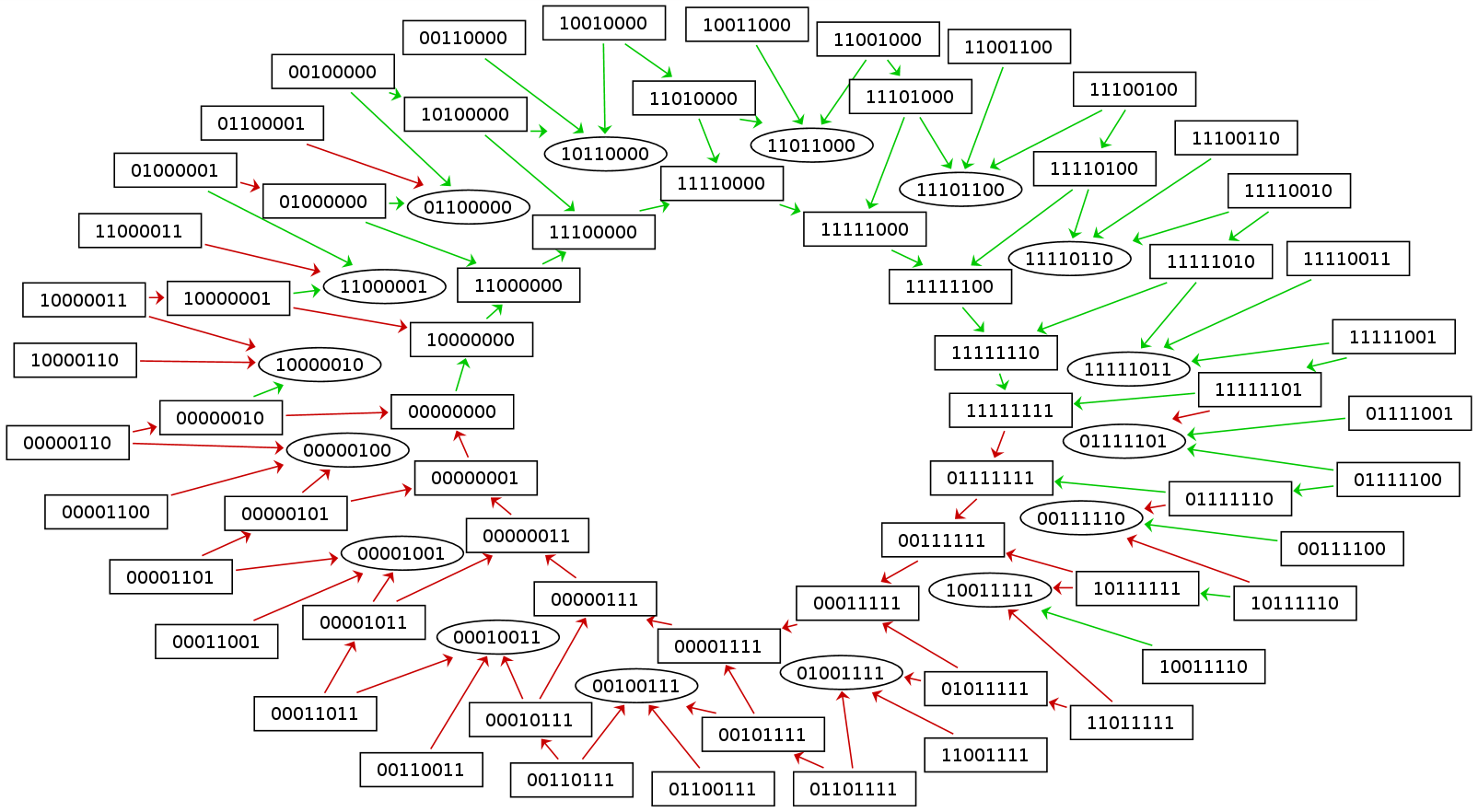}
 \caption{\csentence{Antipodal attractive cycle with no negative circuit.}
  States not shown are all stable states. Adapted from \cite{Ruet2017}.}\label{SI_Fig}
\end{figure}

\subsection{Implementation}\label{sec:implementation}
We implemented our method in the form of a Perl script.
For comparison, Tonello's method \cite{Tonello2017} is also implemented in the same script.
It is available at \cite{Kaji2017} under the MIT license.
The input is a multilevel evolution function described in the Truth table format \cite{Naldi2009}
and the output is a Boolean evolution function described in the same format.
The maximum levels $m_i$ for each gene is automatically detected.
When the input function is defined on $\{0,1,\ldots,m_1\}\times \{0,1,\ldots,m_2\} \times \cdots \times \{0,1,\ldots,m_n\}$,
the converted Boolean network is defined on $\B^{m_1+m_2+\cdots+m_n}$.


\section{Discussion and conclusions}
In the discrete formulation of gene regulatory networks,
a system is commonly modelled by a function.
When some genes take more than two levels,
there are multiple choices for functions having the same (asynchronous) state transition graph.
We single out a unique choice, which we call the asymptotic evolution function (Proposition \ref{graph-function-correspondence}).
Then, we introduced a mapping which converts an asymptotic evolution function to a Boolean evolution function (Definition \ref{dfn:binarisation}). This mapping preserves dynamical and regulatory properties
(Proposition \ref{prop:scc}, Theorem \ref{thm:regulatry}), 
thus allowing us to analyse multilevel networks by methods developed for Boolean networks.

Mappings from multilevel to Boolean networks have been used in the study of gene regulatory networks.
In particular, Van Ham's mapping has been shown by Didier \emph{et al.} to be essentially the only method to provide a one-to-one, neighbour-preserving and regulation-preserving Boolean representation of multilevel models \cite{Didier2011}. However, although the authors did suggest that the mapping could be useful to study the role of regulatory circuits, the question of how interaction functionality contexts are preserved had not been studied so far.

One such instance is Thomas's conjecture, which states that the existence of a cyclic attractor {in the asynchronous state transition graph} requires that of local negative circuit.
The conjecture has recently been given a Boolean counter-example by P.~Ruet \cite{Ruet2017}. Until then, although A.~Richard and and J-P.~Comet had produced a multilevel counter-example \cite{Richard2010}, the Boolean case remained open. It is straightforward to apply Van Ham's method to the counter-example in order to obtain a Boolean model, but it is defined only on the admissible region. To extend the model to the whole Boolean state space, while preserving its dynamics and regulatory relation, is highly non-trivial.
The method we propose has been designed specifically to circumvent the problem. The idea was to avoid extra interactions and circuits 
by extending the state transition graph obtained with Van Ham's method 
in such a way that we have ``parallel trajectories'' going through the whole state space.
This was achieved by loosening the one-to-one criterion such that states including intermediate values in the multilevel model would match with several states in the Boolean version, effectively creating equivalent Boolean transitions for each multilevel transition in the original model.

In a sense, our method works opposite to Van Ham's: instead of embedding multilevel states into Boolean states, we define a Boolean model such that each Boolean state can be mapped to a multilevel state.
With our method, interaction functionality is preserved, and thus all local interaction graphs in the Boolean model come from their counterpart in the original, multilevel model. In contrast, Van Ham's method only preserves the global interaction graph.

One limitation of our method is that the synchronous dynamics of the original multilevel model can not be directly retrieved from the Boolean model produced by the conversion. 
Here, by \emph{synchronous dynamics} we mean the state transition graph
having edges $x\to f(x)$ instead of \eqref{eq:stg} (see, for example, \cite{Remy2003}).
Synchronous state transition is often deemed unrealistic
since it assumes all processes are realised simultaneously with the same delay
 (see discussion by Abou-Jaoud\'e \emph{et al.}~\cite{Abou-Jaoude2016}).
Nevertheless, the synchronous mode is still a popular update method in simulation due to its simplicity, 
and occasionally used for multilevel models (see \emph{e.g.} Chifman \emph{et al.} \cite{Chifman2017} for a recent example). 
If our binarisation is used for a synchronous simulation,
any increase in a variable would translated into its increase to the maximum value.
It is worth noting that the 
counter-examples for Conjecture \ref{negative-circuit-conjecture} 
considered in \S \ref{ex:counter-example} (Richard-Comet's one and its binarisation)
have no stable state in the synchronous state transition graph as well
(and have no type-1 negative functional circuit in the interaction graph).

Finally, our results highlight two opposite strategies, stepwise and asymptotic, for writing the evolution function of a multilevel model.
While both suppress inter-genes regulations,  
the stepwise tends to add positive self-regulations, whereas the asymptotic tends to add negative self-regulations.
This work contributes to a better understanding of the different ways to represent a multilevel system,
for different ways can represent the same model \cite{Streck2015}, which causes ambiguities in the notation.

%

\section*{Acknowledgements}
The authors are grateful to Paul Ruet for explaining his result in \cite{Ruet2017},
to Elisa Tonello for fruitful discussion and careful reading of our draft,
and to Yuki Ikawa and Sergey Tishchenko for their help in the early stage of this work.
The second named author is partially supported by JST PRESTO Grant Number JPMJPR16E3, Japan.

\bibliographystyle{bmc-mathphys} 
\bibliography{grid-dynamics_bmc}      

\end{document}